 \newtheorem{thm}{Theorem}[section]
 \newtheorem{cor}[thm]{Corollary}
 \newtheorem{lem}[thm]{Lemma}
 \newtheorem{prop}[thm]{Proposition}
 \theoremstyle{definition}
 \theoremstyle{remark}
 \newtheorem{rem}[thm]{Remark}
 \numberwithin{equation}{section}
   \def\R{\mathbb{R}}
   \def\sep{{\,|\ }}
   \def\bz{{z}}
   \def\uep{\bm{e}}
   \def\colon{:}
   \def\gam{\partial \Omega}
\begin{document}

\title[Singular Casimir Elements]{Singular Casimir Elements of the Euler Equation
and Equilibrium Points}

\author[Yoshida]{Zensho Yoshida}

\address{%
Graduate School of Frontier Sciences,
\\
The University of Tokyo,
\\
Kashiwa,
\\
Chiba 277-8561,
\\
Japan
}
\email{yoshida@k.u-tokyo.ac.jp}


\author[Morrison]{Philip J. Morrison}
\address{Department of Physics and Institute for Fusion Studies,
\\
University of Texas,
\\
Austin,
\\
Texas 78712-1060,
\\
USA}
\email{morrison@physics.utexas.edu}

\author[Dobarro]{Fernando Dobarro}
\address{Dipartimento di Matematica e Informatica,
\\
Universit\`a degli Studi di Trieste,
\\
Via Valerio 12/b,
\\
Trieste 34127,
\\
Italy }
\email{fdob07@gmail.com}

\subjclass{35Q35,37K30,35J60,57R30}

\keywords{Casimir element, noncanonical Hamiltonian system, singularity, foliation, ideal fluid}

\date{\today}



\begin{abstract}
The problem of the nonequivalence of the sets of equilibrium points and energy-Casimir extremal points, which occurs in the noncanonical Hamiltonian formulation of equations describing ideal fluid and plasma dynamics, is addressed in the context of
the Euler equation for an incompressible inviscid fluid.  The problem is traced to a Casimir deficit, where Casimir elements constitute the center of the  Poisson algebra underlying the Hamiltonian formulation, and this leads to a study of  singularities of the Poisson operator defining the Poisson bracket.  The kernel of the Poisson operator, for this typical example of an
infinite-dimensional Hamiltonian system for media in terms of Eulerian variables, is analyzed.  For two-dimensional flows, a rigorously solvable system is formulated.  The nonlinearity of the Euler equation makes the Poisson operator inhomogeneous
on phase space (the function space of the state variable), and it is seen that this creates a singularity where the nullity of the Poisson operator (the ``dimension'' of the center) changes.   The problem is an infinite-dimension generalization of the theory of singular differential equations.
Singular Casimir elements stemming from this singularity are unearthed using a generalization of the functional derivative that occurs in the Poisson bracket.

\end{abstract}

\maketitle


\section{Introduction}
\label{sec:introduction}

Equations that describe ideal fluid and plasma dynamics in terms of Eulerian variables
are Hamiltonian in terms of noncanonical Poisson brackets, degenerate brackets in noncanonical coordinates.  Because of degeneracy, such Poisson brackets possess Casimir elements, invariants that have been used to construct variational principles for equilibria and stability.\footnote{
The first clear usage of the energy-Casimir method for stability appears to be
Kruskal and Oberman\,\cite{Kruskal-Oberman}.  See \cite{morrison98} for an historical discussion. }
However, early on it was recognized that typically there are not enough Casimir elements to obtain all equilibria as extremal points of these variational principles.
In \cite{morrison98,vivek} it was noted that this Casimir deficit is attributable to rank changing of the operator that defines the noncanonical Poisson bracket.
Thus, a mathematical study of the kernel of this operator is indicated,
\footnote{
Also, in \cite{morrison98} it is described how one can for general rank changing cosymplectic operators use a particular kind of constrained variation, called dynamically accessible variations in a sequence of papers beginning with Morrison and Pfirsch\,\cite{Morrison-Pfirsch1989}, but this skirts the central mathematical problem, which is addressed in the present paper.}
and this is the main purpose of the present article.

Recognizing a Hamiltonian flow as a differential operator, the point where the rank of
the Poisson bracket changes is a singularity, from which singular (or intrinsic) solutions stem.
When we consider a Hamiltonian flow on a function space, the problem is an infinite-dimension
generalization of the theory of singular differential equations; the derivatives are functional derivatives, and the construction of singular Casimir elements amounts to  integration in an infinite-dimension space.
In order to facilitate this study,  it is necessary to place the noncanonical Hamiltonian formalism on a more rigorous footing, and this subsidiary purpose is addressed in the context of Euler's equation of fluid mechanics, although the ideas presented are of more general applicability than this particular example.

We start by reviewing finite-dimensional canonical and noncanonical Hamiltonian mechanics, in order to formulate our problem. These dynamical system have the form
\begin{equation}
\frac{d \bz }{dt}= J\, \partial_{\bz} H(\bz)\,,
\label{Hamilton_eq_1}
\end{equation}
where $\bz=(z^1,z^2,\dots, z^m)$ denotes a set of phase space coordinates, $H$ is the Hamiltonian function with $\partial_z$ its gradient,
and the $m\times m$ matrix $J$ (variously called e.g.\ the cosymplectic form, Poisson tensor, or symplectic operator)
is the essence of the Poisson bracket and determines
important Lie algebraic properties\,\cite{morrison98}
(see also Remark\,\ref{remark:Lie-Poisson}).

For canonical Hamiltonian systems of dimension $m=2n$ the matrix $J$ has the form
\begin{equation}
 J_c = \left( \begin{array}{cc}
  ~~0_n \hfill & I_n \\
  -I_n & 0_n\hfill \\
  \end{array} \right)
  \,.
 \label{canHam}
\end{equation}
\emph{Noncanonical} Hamiltonian systems allow a $z$-dependent $J(\bz)$ (assumed here to be a holomorphic function)
to have a kernel, i.e.\  $\mathrm{Rank}(J(\bz))$ may be less than $m$ and may change as a function of $\bz$.

{}From (\ref{Hamilton_eq_1}) it is evident that equilibrium points of the dynamics,
i.e.\ points for which $d \bz /dt =0\ \forall t$, satisfy
\begin{equation}
\partial_{\bz}H(\bz)=0\,.
\label{vacuum_stationary_point}
\end{equation}
However, in the noncanonical case these may not be the only equilibrium points of a given Hamiltonian $H(\bz)$, because  degeneracy gives rise to \emph{Casimir elements} $C(\bz)$, nontrivial (nonconstant) solutions to the differential equation
\begin{equation}
J(\bz) \partial_{\bz} C(\bz) =0\,.
\label{Casmir-1}
\end{equation}
Given such a $C(\bz)$, replacement of the Hamiltonian by $H(\bz)+C(\bz)$
does not change the dynamics.  Thus, an extremal point of
\begin{equation}
\partial_{\bz}[H(\bz)+C(\bz)]=0
\label{stationary_points}
\end{equation}
will also give an equilibrium point.  Note, in light of the homogeneity of (\ref{Casmir-1}) an arbitrary multiplicative constant can be absorbed into $C$ and so (\ref{stationary_points}) can give rise to families of equilibrium points.

If $\mathrm{Rank}(J(\bz))=m=2n$, (\ref{Casmir-1}) has only the trivial solution $C(\bz)=$ constant.
If $\mathrm{Rank}(J(\bz))=2n<m$ and $n$ is constant, then (\ref{Casmir-1})
has $m-2n$ functionally independent solutions (Lie-Darboux theorem).
The problem becomes more interesting
if there is a \emph{singularity} where $\mathrm{Rank}(J(\bz))$ changes:
in this case we have a singular (hyperfunction) Casimir element
(see Fig.\,\ref{fig:foliation}).
For example, consider the one-dimensional system where $J=ix$ ($x\in \mathbb{R}$).
At $x=0$ $\mathrm{Rank}(J)$ drops to 0, and this point is a singular point
of the differential equation $J(x) \partial_x C=0$.
The singular Casimir element is $C(x)={Y}(x)$, where $Y$ is the Heaviside step function.

\begin{figure}[bt]
~~
\begin{center}
  \includegraphics[width=0.8\textwidth]{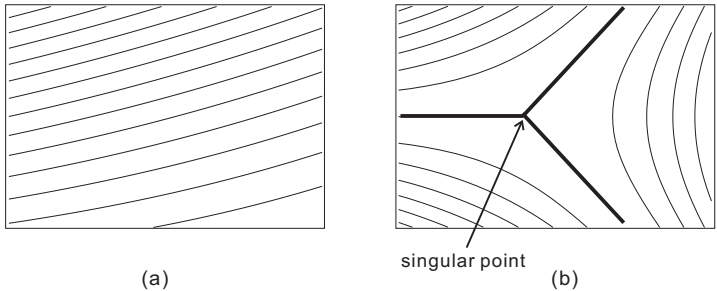}
\end{center}
\caption{
Low-dimensional cartoon  of a foliated phase space.
(a)  Depiction of level-sets  (leaves) of Casimir elements foliating  the phase space.
Since a Casimir element is a constant of motion, every orbit is constrained to a leaf
determined by a Casimir element.  In this cartoon of a two-dimensional phase space, each
Casimir (symplectic) leaf has codimension one, hence in this depiction the
effective space of dynamics has dimension one.
(b) Depiction of phase space with a singular point where $\mathrm{Rank}(J)$ changes.  The
codimension of the Casimir foliation changes, resulting in a
singular Casimir leaf (with the determining Casimir element being a \emph{hyperfunction}).
In the figure the codimension of the Casimir leaf at the singular point is two,
hence the singular point is an equilibrium point.  In higher-dimensional
(infinite-dimensional) phase space, a singular point may have far richer structure.
In Sec.~\ref{sec:Casimirs}, we  delineate how a singular point is
created in an infinite-dimensional phase space by examining an
infinite-dimensional Hamiltonian system of Eulerian fluid (to be formulated
in Sec.~\ref{sec:Euler_equation}).
In Sec.~\ref{sec:equilibrium}, we will study the structure of the singular point
(which is still  infinite-dimensional) by examining  equilibrium points.}
\label{fig:foliation}
\end{figure}

We generalize (\ref{Hamilton_eq_1}) further to include infinite-dimension systems.
Let $u\in X$ be the state variable, where for now $X$ is some unspecified function space,
$\mathcal{J}(u)$ be a linear antisymmetric operator in $X$ that generally depends on $u$ (for a fixed $u$,
$\mathcal{J}(u)$ may be regarded as a linear operator $X\rightarrow X$ --
see Remark \ref{remark:J_operator} below),
and $H(u)$ be a functional $X\rightarrow\mathbb{R}$.
Introducing an appropriate functional derivative (gradient) $\partial_u H(u)$,
we consider  evolution equations of the form
\begin{equation}
\partial_t u = \mathcal{J}(u) \partial_{u} H(u)\,,
\label{Hamilton_eq_2}
\end{equation}
where $\partial_t u:=\partial u/\partial t$.
A Casimir element $C(u)$ (a functional $X\rightarrow\mathbb{R}$) is a nontrivial solution to
\begin{equation}
\mathcal{J} (u) \partial_{u} C(u) =0\,.
\label{Casmir-2}
\end{equation}
We may solve (\ref{Casmir-2}) by two steps:
\begin{enumerate}
\item
Find the kernel of $\mathcal{J}(u)$, i.e., solve a ``linear equation''
(cf. Remark \ref{remark:J_operator})
\begin{equation}
\mathcal{J}(u) v = 0
\label{Casmir-2-1}
\end{equation}
to determine $v$ for a given $u$, which we write as $v(u)$.

\item
``Integrate'' $v(u)$ with respect to $u$ to find a functional $C(u)$
such that $v(u) = \partial_u C(u)$.
\end{enumerate}

As evident in the above finite-dimension example, \emph{step-1} should
involve ``singular solutions'' if $\mathcal{J}(u)$ has \emph{singularities}.
Then, \emph{step-2} will be rather nontrivial -- for \emph{singular Casimir
elements}, we will need to generalize the notion of \emph{functional derivative}.
As mentioned above, the present paper is devoted to such an extension of the notion of Casimir elements
in infinite-dimensional noncanonical Hamiltonian systems. Specifically, we invoke the Euler equation of ideal fluid mechanics as an example, but much carries over to other systems since many fluid and plasma systems share the same operator $ \mathcal{J}$.  In Sec.~\ref{sec:Euler_equation}, we will describe the Hamiltonian form of Euler's equation, which places on a more rigorous footing the formal calculations of \cite{MG80,morrison80,morrison81a,morrison81b,morrison82,olver82}.
In Sec.~\ref{subsec:kernel}, we will analyze the kernel of the corresponding
Poisson operator $\mathcal{J}(u)$ and its singularity.
A singular Casimir element and its appropriate generalized functional derivative (gradient)
will be given in Sec.~\ref{subsec:Casimirs}.

The relation between the (generalized) Casimir elements and equilibrium points
(stationary ideal flows) will be discussed in Sec.~\ref{sec:equilibrium}.
Generalizing (\ref{stationary_points}) to an infinite-dimensional space,
we may find an extended set of equilibrium points by solving
\begin{equation}
\partial_{u}[H(u)+ C(u)]=0.
\label{stationary_points2}
\end{equation}
We note, however, it is still uncertain whether or not every equilibrium point
can be obtained from Casimir elements in this way.
For example, let us consider a simple Hamiltonian $H(u)=\|u\|^2/2$
(in Appendix~A, the Hamiltonian of the Euler equation is given in terms of the velocity field $\bm{u}$,
which here corresponds to the state variable $u$).
Then, $\partial_{u} H(u)=u$ and (\ref{Hamilton_eq_2}) reads
\begin{equation}
\partial_t u = \mathcal{J}(u) u .
\label{Hamilton_eq_3}
\end{equation}
The totality of nontrivial equilibrium points is $\mathrm{Ker}(\mathcal{J}(u))$.
For $v\in\mathrm{Ker}(\mathcal{J}(u))$ to be characterized by
(\ref{stationary_points2}), which now simplifies to $u=- \partial_u C(u)$,
we encounter the ``integration problem,'' i.e., we have to
construct $C(u)$ such that $v(u) = \partial_u C(u)$
for every $v(u)\in\mathrm{Ker}(\mathcal{J}(u))$
-- this may not be always possible.
On the other hand, even for a given $C(u)$, the ``nonlinear equation''
$u=- \partial_u C(u)$ does not necessarily have a solution --
in Sec.\,\ref{subsec:no-solution} we will show some examples of no-solution equations.
While we leave this question (the ``integrability'' of all equilibrium points)
open, the present effort shows it is sometimes possible and provides a more complete understanding of the stationary states of infinite-dimensional dynamical systems.  In Sec.~\ref{sec:conclusion} we give some concluding remarks.

\begin{rem}
\label{remark:Lie-Poisson}
We endow the phase space $X$ of state variable $u$ ($\bz$ if $X$ is finite-dimensional)
with an inner product $(a, b)$.
Let $F :\, X\rightarrow \mathbb{R}$ be an arbitrary smooth functional (function if $X$ is finite-dimensional).
If $u$ obeys (\ref{Hamilton_eq_2}) in a Hilbert space (or (\ref{Hamilton_eq_1}) for finite-dimensional systems),
the evolution of $F(u)$ obeys
$d F({u})/dt = [F({u}), H({u})]$,
where
\[
[F({u}), H({u})] = (\partial_{{u}} F({u}),\mathcal{J}({{u}}) \partial_{{u}} H({u}))
\]
is an antisymmetric bilinear form.
If this bracket $[~,~]$ satisfies the Jacob identity, it defines a Poisson algebra, a Lie algebra realization on functionals.
A Casimir element $C$ is a member of the \emph{center} of the Poisson algebra,
i.e., $[C, G]=0$ for all $G$.
The bracket defined by the Poisson operator of Sec.\,\ref{subsec:Poisson}
satisfies the Jacobi's identity \cite{morrison81a,morrison82,morrison98}.  The Jacobi identity is satisfied for all Lie-Poisson brackets, a class of Poisson  brackets  that describe matter that are built from the structure constants of Lie algebras (see, e.g., \cite{morrison98}).  For finite-dimensional systems there is a beautiful geometric interpretation of such brackets where phase space is the dual of the Lie algebra and surfaces of constant Casimirs, coadjoint orbits, are symplectic manifolds.   Unfortunately, in infinite-dimensions, i.e., for nonlinear partial differential equations,  there are functional analysis challenges that limit this interpretation  (see, e.g., \cite{khesin}).  (For example, for the incompressible Euler fluid equations the group is that of volume preserving diffeomorphisms.)  In terms of this interpretation, the analysis of the present paper can be viewed as a local probing of the coadjoint orbit.
\end{rem}

\begin{rem}
\label{remark:J_operator}
In (\ref{Hamilton_eq_2}),
the operator $\mathcal{J}(u)$ must be evaluated at the common $u$
of $\partial_{u}H(u)$, thus $\mathcal{J}(u)\partial_{u}H(u)$ is a
nonlinear operator with respect to $u$.
However, the application of $\mathcal{J}(u)$ (or $\mathcal{J}(u)\partial_{u}$)
may be regarded as a linear operator in the sense that
\[
\mathcal{J}(u)(a v + b w) = a\mathcal{J}(u)v + b \mathcal{J}(u)w
\]
or
\[
\mathcal{J}(u)\partial_{u} [aF(u)+bG(u)]
=a\mathcal{J}(u)\partial_{u} F(u)+b\mathcal{J}(u)\partial_{u} G(u).
\]
Note that $\mathcal{J}(u) v$ (for $v=\partial_{u} F(u)$) is \emph{not}
$\mathcal{J}(v)v$.
\end{rem}

\section{Hamiltonian Form of the Euler Equation}
\label{sec:Euler_equation}

Investigation of the Hamiltonian form of ideal fluid mechanics has a long history.
Its essence is contained in Lagrange's original work \cite{lagrange} that described the fluid in terms of the ``Lagrangian'' displacement. Important subsequent contributions are due to Clebsch\,\cite{clebsch1,clebsch2} and Kirchhoff \cite{kirch}.
In more recent times, the formalism has been addressed in various ways by many authors (e.g.\,\cite{eckart,newcomb,arnold1,arnold2,arnold3,zakharov}).
Here we follow the noncanonical Poisson bracket description as described in \cite{MG80,morrison98}.
Analysis of the kernel of $\mathcal{J}$ requires careful definitions.  For this reason we review rigorous results about Euler's equation in Sec.~\ref{subsec:vorticity_equation}, followed by explication of various aspects of the Hamiltonian description in
Secs.~\ref{subsec:Hamiltonian} 
-- \ref{subsec:Hamiltonian-form}.
This places aspects of the noncanonical Poisson bracket formalism of\,\cite{morrison80,morrison81a,morrison82,olver82} on a more rigorous footing;
of particular interest, of course, is the Poisson operator $\mathcal{J}$ that defines the Poisson bracket.

\subsection{Vorticity Equation}
\label{subsec:vorticity_equation}
Euler's equation of motion for an incompressible inviscid fluid is
\begin{eqnarray}
\partial_t \bm{u} + (\bm{u}\cdot\nabla)\bm{u} = -\nabla p
~~(\mathrm{in~}\Omega)\,,
\label{Euler-1}
\\
\nabla\cdot\bm{u}=0
~~(\mathrm{in~}\Omega)\,,
\label{incompressibility}
\\
\bm{n}\cdot\bm{u}=0
~~(\mathrm{on~}\gam )\,,
\label{BC}
\end{eqnarray}
where $\Omega$ is a bounded domain in $\R^n$ ($n= 2$ or 3) with
a sufficiently smooth ($C^{2+\epsilon}$-class) boundary $\gam$, $\bm{n}$ is the unit vector normal to $\gam$,
$\bm{u}$ is an $n$-dimensional vector field representing the velocity field,
and $p$ is a scalar field representing the fluid pressure (or specific enthalpy);
all fields are real-valued functions of time $t$ and the spatial coordinate $\bm{x}\in\Omega$.

We may rewrite (\ref{Euler-1}) as
\begin{equation}
\partial_t \bm{u} = \bm{u}\times\bm{\omega} -\nabla \tilde{p}
\,,
\label{Euler-2}
\end{equation}
where $\bm{\omega}=\nabla\times\bm{u}$ is the vorticity and $\tilde{p}=p+u^2/2$
is the total specific energy.
The curl derivative of (\ref{Euler-2}) gives the \emph{vorticity equation}
\begin{equation}
\partial_t \bm{\omega} = \nabla\times(\bm{u}\times\bm{\omega})
\,.
\label{Vorticity_eq}
\end{equation}

We prepare basic function spaces pertinent to the mathematical formulation of the
Euler equation. Let $L^2(\Omega)$ be the Hilbert space of Lebesgue-measurable and square-integrable
real vector functions on $\Omega$, which is
endowed with the standard inner product $(\bm{a},\bm{b})=\int_\Omega \! dx\, \bm{a}\cdot\bm{b}$
and  norm $\| \bm{a} \|=(\bm{a},\bm{a})^{1/2}$.
We will also use the standard notation for Sobolev spaces (for example, see\,\cite{Brezis2}).
We define
\begin{equation}
L_\sigma^2(\Omega) = \{ \bm{u}\in L^2(\Omega)\sep \nabla\cdot\bm{u}=0,~\bm{n}\cdot\bm{u}=0\},
\label{L^2_sigma}
\end{equation}
where $\bm{n}\cdot\bm{u}$ denotes the trace of the normal component of $\bm{u}$ onto the
boundary $\gam $, which is a continuous map from $\{ \bm{u}\in L^2(\Omega)\sep
\nabla\cdot\bm{u}\in L^2(\Omega)\} $ to $H^{-1/2}(\gam )$.
We have an orthogonal decomposition
\begin{equation}
L^2(\Omega)=L_\sigma^2(\Omega) \oplus \{\nabla\theta\sep \theta\in H^1(\Omega)\}.
\label{decomposition_of_L2}
\end{equation}
Every $\bm{u}\in L_\sigma^2(\Omega)$ satisfies the conditions (\ref{incompressibility})
and (\ref{BC}), thus we will consider (\ref{Euler-2}) to be an evolution equation in the
function space $L_\sigma^2(\Omega)$ (cf.\  Appendix A).

Hereafter, we assume that the spatial domain has dimension $n=2$, and
$\Omega\subset\R^2$ is a smoothly bounded and simply connected (genus=0) region.
\footnote{
Generalization to a multiply connected region is not difficult and is done as follows: first
extract $L^2_H(\Omega)=\{\bm{u}\in L^2_\sigma(\Omega)\sep \nabla\times\bm{u}=0\}$
from $L^2_\sigma(\Omega)$, i.e.\  express $L^2_\sigma(\Omega)=L^2_H(\Omega)\oplus L^2_\Sigma(\Omega)$,
where the dimension of the subspace $L^2_H(\Omega)$ is equal to the genus of $\Omega$.
Then, the projection of $\bm{u}$, which obeys (\ref{Euler-1})-(\ref{BC}), is shown to be constant throughout the evolution, whence we may regard (\ref{Euler-2}) as an evolution equation in $L^2_\Sigma(\Omega)$.}
For convenience in formulating equations, we immerse $\Omega\subset\R^2$ in $\R^3$ by adding a
``perpendicular'' coordinate $z$, and we write $\uep=\nabla z$.

\begin{lem}
\label{lemma:Clebsch_representation}
Every two-dimensional vector field $\bm{u}$ satisfying
the incompressibility condition (\ref{incompressibility}) and
the vanishing normal boundary condition (\ref{BC})
can be written as
\begin{equation}
\bm{u}=\nabla\varphi\times\uep
\label{Clebsch_2form}
\end{equation}
with a single-value function $\varphi$ such that $\varphi|_{\gam} =0$,
i.e.,
\begin{equation}
L_\sigma^2(\Omega) = \{ \nabla\varphi\times\uep \sep
\varphi\in H^1_0(\Omega) \}.
\label{Clebsch_representation}
\end{equation}
\end{lem}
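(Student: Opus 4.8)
The plan is to reduce the statement to the Hodge/Helmholtz decomposition already recorded in (\ref{decomposition_of_L2}), together with the classical fact that on a simply connected planar domain a curl-free $L^2$ field is a gradient, and then to extract the homogeneous Dirichlet condition on $\varphi$ from the vanishing normal trace of $\bm{u}$.

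First I would rotate. Given $\bm{u}=(u_1,u_2)\in L_\sigma^2(\Omega)$, set $\bm{u}^\perp:=(-u_2,u_1)$ (equivalently $\bm{u}^\perp=-\bm{u}\times\uep$ when $\bm{u}$ is viewed in $\R^3$). The two-dimensional scalar curl of $\bm{u}^\perp$ equals $\partial_x u_1+\partial_y u_2=\nabla\cdot\bm{u}$, which vanishes in $\mathcal{D}'(\Omega)$ by (\ref{incompressibility}). Decomposing $\bm{u}^\perp=\bm{w}+\nabla\theta$ according to (\ref{decomposition_of_L2}), with $\bm{w}\in L_\sigma^2(\Omega)$ and $\theta\in H^1(\Omega)$, the curl of $\bm{w}$ also vanishes, so $\bm{w}$ is simultaneously divergence-free, curl-free, and tangent to $\gam$; on a genus-zero domain the space of such harmonic fields is trivial (this is exactly the space $L^2_H(\Omega)$ of the footnote), hence $\bm{w}=0$ and $\bm{u}^\perp=\nabla\theta$. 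Undoing the rotation, and matching the convention $\nabla\varphi\times\uep=(\partial_y\varphi,-\partial_x\varphi)$, gives $\bm{u}=\nabla\varphi\times\uep$ with $\varphi:=\theta\in H^1(\Omega)$, which is (\ref{Clebsch_2form}) up to the boundary behavior of $\varphi$.

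Next I would pin down the boundary condition. For smooth $\varphi$ one has the pointwise identity $\bm{n}\cdot(\nabla\varphi\times\uep)=\bm{\tau}\cdot\nabla\varphi=\partial_\tau\varphi$ on $\gam$, where $\bm{\tau}$ is the unit tangent and $\partial_\tau$ the arclength derivative along $\gam$; by density and continuity of the trace maps this persists, as an identity in $H^{-1/2}(\gam)$, for $\varphi\in H^1(\Omega)$. Since membership in $L_\sigma^2(\Omega)$ forces $\bm{n}\cdot\bm{u}=0$ by (\ref{BC}), we obtain $\partial_\tau(\varphi|_{\gam})=0$; as $\Omega$ is simply connected, $\gam$ is a single closed curve, so $\varphi|_{\gam}$ equals a constant $c$. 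Replacing $\varphi$ by $\varphi-c$ leaves $\nabla\varphi$, and hence (\ref{Clebsch_2form}), unchanged and yields $\varphi\in H_0^1(\Omega)$, proving the inclusion ``$\subseteq$'' in (\ref{Clebsch_representation}). The reverse inclusion is immediate: for $\varphi\in H_0^1(\Omega)$ the field $\nabla\varphi\times\uep$ lies in $L^2(\Omega)$, is divergence-free since $\nabla\cdot(\nabla\varphi\times\uep)=\partial_x\partial_y\varphi-\partial_y\partial_x\varphi=0$, and has vanishing normal trace since $\bm{n}\cdot(\nabla\varphi\times\uep)=\partial_\tau(\varphi|_{\gam})=0$; hence it belongs to $L_\sigma^2(\Omega)$.

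The routine parts (the vector identities, density of smooth fields, continuity of the trace and normal-trace maps) I would only sketch. The genuinely delicate point is the low-regularity bookkeeping at the boundary: making rigorous the identity $\bm{n}\cdot(\nabla\varphi\times\uep)=\partial_\tau(\varphi|_{\gam})$ for merely $H^1$ potentials — equivalently, that for $\bm{u}\in L^2(\Omega)$ with $\nabla\cdot\bm{u}\in L^2(\Omega)$ the normal trace is well defined in $H^{-1/2}(\gam)$ and coincides with the tangential derivative of the stream function — and then concluding that a distribution on the connected curve $\gam$ with vanishing tangential derivative is constant. It is worth stating explicitly that genus zero is used twice: once to annihilate the harmonic remainder $\bm{w}$, so that $\varphi$ is globally single-valued, and once to guarantee that $\gam$ is connected, so that ``$\varphi$ constant on $\gam$'' is a single constant that may be subtracted off; in the multiply connected case these are precisely where the extra cohomology and the footnote's $L_H^2(\Omega)$ enter.
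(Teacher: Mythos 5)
Your argument is correct, but it is not the route the paper takes. The paper proves (\ref{Clebsch_representation}) by an orthogonality argument: it first checks the easy inclusion $X:=\{\nabla\varphi\times\uep\sep\varphi\in H^1_0(\Omega)\}\subseteq L^2_\sigma(\Omega)$, and then shows that any $\bm{u}\in L^2_\sigma(\Omega)$ orthogonal to all of $X$ satisfies $(\uep\cdot\nabla\times\bm{u},\varphi)=0$ for every $\varphi\in H^1_0(\Omega)$, hence is curl-free as well as divergence-free and tangent to $\gam$, hence zero in a simply connected domain; so $X^\perp\cap L^2_\sigma(\Omega)=\{0\}$ and $X=L^2_\sigma(\Omega)$. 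You instead argue constructively: rotate $\bm{u}$ to a curl-free field, produce the stream function $\theta$ from the Poincar\'e lemma (or, equivalently, from (\ref{decomposition_of_L2}) plus triviality of the harmonic space), and then normalize the boundary constant. Both proofs ultimately rest on the same topological ingredient (a divergence-free, curl-free field tangent to the boundary of a genus-zero domain vanishes), but they distribute the technical burden differently. The paper's duality argument needs essentially no boundary-trace analysis beyond the trivial case $\varphi|_{\gam}=0$, though it silently uses that $X$ is a closed subspace (true, since $\varphi\mapsto\nabla\varphi\times\uep$ is an isometry of $H^1_0(\Omega)$ with the Dirichlet norm onto $X$, but worth saying). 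Your version avoids any closedness issue because it exhibits the potential directly, and it makes explicit the two distinct uses of simple connectivity; the price is the low-regularity bookkeeping you flag yourself --- justifying $\bm{n}\cdot(\nabla\varphi\times\uep)=\partial_\tau(\varphi|_{\gam})$ in $H^{-1/2}(\gam)$ for $\varphi\in H^1(\Omega)$ and concluding that a trace with vanishing tangential derivative on the connected curve $\gam$ is constant --- which is genuinely needed in your approach and genuinely absent from the paper's.
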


\begin{proof}
For the convenience of the reader, we sketch the proof of this
frequently-used lemma.\footnote{
The function $\varphi$ is sometimes called a Clebsch potential.
To represent an incompressible flow of dimension $n$, we need $n-1$ \emph{Clebsch potentials}
$\varphi_1,\cdots,\varphi_{n-1}$,
where each $\varphi_j$ does not have a uniquely determined boundary condition\,\cite{Yoshida2009}.
Hence,  the vorticity representation is not effective in higher dimensions.
In Appendix A, we invoke another method to eliminate the pressure term and formulate
the problem in an alternative form, which applies in general space dimension.
See e.g.\ \cite{eckart,clebsch1,clebsch2,morrison82,morrison06}
for discussions of various potential representations.
}
Evidently, $\nabla\cdot( \nabla\varphi\times\uep)=\nabla\cdot[\nabla\times(\varphi\uep)]=0$,
and $\left.\bm{n}\cdot( \nabla\varphi\times\uep)\right|_{\gam } = \left.(\uep\times\bm{n})\cdot\nabla\varphi\right|_{\gam }=0$
if $\varphi\in H^1_0(\Omega)$.
Thus, the linear space $X=\{ \nabla\varphi\times\uep\sep \varphi\in H^1_0(\Omega)\}$
is contained in $L_\sigma^2(\Omega)$.
And, the orthogonal complement of $X$ in $L^2_\sigma(\Omega)$ contains only the zero vector:
Suppose that $\bm{u}\in L^2_\sigma(\Omega)$ satisfies
\begin{equation}
(\bm{u}, \nabla\varphi\times\uep)=0 \quad \forall\varphi\in H^1_0(\Omega).
\label{Clebsch-1}
\end{equation}
By the generalized Stokes formula, we find
$(\bm{u}, \nabla\varphi\times\uep)=(\uep\cdot\nabla\times\bm{u}, \varphi)$.
Since $\nabla\times\bm{u}$ has only the $\uep$ component, (\ref{Clebsch-1})
implies $\nabla\times\bm{u}=0$.  Since $\bm{u}\in L^2_\sigma(\Omega)$, we also have
$\nabla\cdot\bm{u}=0$ and $\left.\bm{n}\cdot\bm{u}\right|_{\gam }=0$.  In a simply connected $\Omega$,
the only such $\bm{u}$ is the zero vector.  Hence, we have (\ref{Clebsch_representation}).
\end{proof}

Using the representation (\ref{Clebsch_2form}), we may formally calculate
\[
\bm{\omega}=\nabla\times\bm{u}=(-\Delta \varphi)\uep=: \omega\uep\,.
\]
The vorticity equation (\ref{Vorticity_eq}) simplifies to a single $\uep$-component equation:
\footnote{
For sufficiently smooth $\bm{u}$, the two-dimensional vorticity equation (\ref{Vorticity_eq-2D})
has the form of a nonlinear Liouville equation. The corresponding Hamiltonian equations (characteristic ODEs)
are given in terms of a streamfunction $\varphi$ as
$d\bm{x}/dt = (\partial_y \varphi, -\partial_x\varphi ) = \bm{u}$.
By the boundary condition
$\bm{n}\cdot\bm{u}=(\uep\times\bm{n})\cdot\nabla(\mathcal{K}\omega)=0$,
the characteristic curves are confined in $\Omega$.  Hence, we do not need
(or, cannot impose) a boundary condition on $\omega$, and the single equation (\ref{Vorticity_eq-2D}) determines
the evolution of $\omega$ and the velocity field is obtained from
$\bm{u}=\nabla(\mathcal{K}\omega)\times\uep$. }
\begin{equation}
\partial_t {\omega} = \{\omega , \mathcal{K} \omega  \} ~~(\mathrm{in~}\Omega),
\label{Vorticity_eq-2D}
\end{equation}
where
\[
\{a,b\}=-\nabla a \times \nabla b \cdot \uep
= \partial_y a \cdot \partial_x b - \partial_x a \cdot \partial_y b,
\]
and $\mathcal{K}$ is the inverse map of $-\Delta$ with the Dirichlet boundary condition,
i.e., $\mathcal{K}\colon\omega\mapsto \varphi$ gives the solution of the Laplace equation
\begin{equation}
-\Delta \varphi = \omega~(\mathrm{in~}\Omega),
\quad \varphi=0~(\mathrm{on~}\gam ).
\label{operator-K}
\end{equation}
As is well-known, $\mathcal{K}\colon L^2(\Omega)\rightarrow H^1_0(\Omega)\cap H^2(\Omega)$
is a self-adjoint compact operator.
For $\varphi\in H^1_0(\Omega)$, we define $\omega = -\Delta \varphi$ as a member of
$H^{-1}(\Omega)$, the dual space of $H^1_0(\Omega)$ with respect to the inner-product of $L^2(\Omega)$.
The inverse map (weak solution), then, defines $\mathcal{K}\colon H^{-1}(\Omega)\rightarrow H^1_0(\Omega)$.

\begin{lem}
\label{lemma:weak}
We regard the vorticity equation (\ref{Vorticity_eq-2D}) as an evolution equation
in $H^{-1}(\Omega)$, i.e., we consider the weak form:
\begin{equation}
(\partial_t\omega-\{\omega,\mathcal{K}\omega\},\phi)=0 \quad \forall\phi\in H^1_0(\Omega)\,.
\label{Vorticity_eq-2D-math}
\end{equation}
By the relations $\varphi=\mathcal{K}\omega$, $\bm{u}=\nabla\varphi\times\uep$, and
$\bm{\omega}=\omega\uep$,
(\ref{Vorticity_eq-2D-math})
is equivalent to the Euler equation (\ref{Euler-2})
as an evolution equation in $L^2_\sigma(\Omega)$.
\end{lem}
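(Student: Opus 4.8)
The plan is to reduce the asserted equivalence to a single pairing identity, and then to read off both implications from the orthogonal decomposition~(\ref{decomposition_of_L2}) together with the Clebsch representation of Lemma~\ref{lemma:Clebsch_representation}. Throughout I would use the dictionary $\varphi=\mathcal K\omega$, $\bm u=\nabla\varphi\times\uep$, $\bm\omega=\omega\uep$, and note first that, since $\mathcal K$ and $\nabla\times$ are fixed bounded maps, the correspondences $\omega\leftrightarrow\bm u$ and $\partial_t\omega\leftrightarrow\partial_t\bm u$ are compatible: $\partial_t\bm u=\nabla(\mathcal K\partial_t\omega)\times\uep$ and $-\Delta\,\partial_t\varphi=\partial_t\omega$ (using $-\Delta\mathcal K=I$, cf.~(\ref{operator-K})), so the two time-regularities match.

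Then I would test the vector field $\partial_t\bm u-\bm u\times\bm\omega$ against an arbitrary element $\nabla\phi\times\uep$ of $L^2_\sigma(\Omega)$, $\phi\in H^1_0(\Omega)$. For the time-derivative term, the generalized Stokes formula already used in the proof of Lemma~\ref{lemma:Clebsch_representation} gives $(\nabla a\times\uep,\nabla b\times\uep)=(\uep\cdot\nabla\times(\nabla a\times\uep),b)=(-\Delta a,b)$ for $b\in H^1_0(\Omega)$; taking $a=\mathcal K\partial_t\omega$ and $b=\phi$ yields $(\partial_t\bm u,\nabla\phi\times\uep)=(\partial_t\omega,\phi)$. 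For the ``Lorentz'' term, the elementary vector identity $(\nabla\varphi\times\uep)\times(\omega\uep)=-\omega\nabla\varphi$ (valid because $\nabla\varphi\cdot\uep=0$) gives $\bm u\times\bm\omega=-\omega\nabla\varphi$, while $\nabla\varphi\cdot(\nabla\phi\times\uep)=\{\phi,\varphi\}$ in the bracket notation of~(\ref{Vorticity_eq-2D}); hence $(\bm u\times\bm\omega,\nabla\phi\times\uep)=-(\omega,\{\phi,\varphi\})$, and the cyclic (integration-by-parts) identity $\int_\Omega\{a,b\}c\,dx=\int_\Omega a\{b,c\}\,dx$ — whose boundary contributions vanish here because one of the three functions lies in $H^1_0(\Omega)$ — rewrites this last expression as $(\{\omega,\varphi\},\phi)$. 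Putting the two computations together I obtain
\[
(\partial_t\bm u-\bm u\times\bm\omega,\ \nabla\phi\times\uep)=(\partial_t\omega-\{\omega,\mathcal K\omega\},\ \phi)\qquad\forall\,\phi\in H^1_0(\Omega).
\]

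From this identity both implications are immediate. If $\bm u$ solves~(\ref{Euler-2}) in $L^2_\sigma(\Omega)$, then $\partial_t\bm u-\bm u\times\bm\omega=-\nabla\tilde p$ is a gradient, hence orthogonal to $L^2_\sigma(\Omega)$ by~(\ref{decomposition_of_L2}); the left-hand side above vanishes for every $\phi$, so $\omega=\uep\cdot\bm\omega$ satisfies the weak vorticity equation~(\ref{Vorticity_eq-2D-math}). (Equivalently one may simply take the curl of~(\ref{Euler-2}) to recover~(\ref{Vorticity_eq-2D}) and pair with $\phi$.) Conversely, if $\omega$ solves~(\ref{Vorticity_eq-2D-math}), the right-hand side vanishes for every $\phi\in H^1_0(\Omega)$, so $\partial_t\bm u-\bm u\times\bm\omega$ is orthogonal to $\{\nabla\phi\times\uep\sep\phi\in H^1_0(\Omega)\}=L^2_\sigma(\Omega)$ by Lemma~\ref{lemma:Clebsch_representation}; by the orthogonal decomposition~(\ref{decomposition_of_L2}) there is then $\tilde p\in H^1(\Omega)$, unique up to an additive constant, with $\partial_t\bm u-\bm u\times\bm\omega=-\nabla\tilde p$, which is precisely~(\ref{Euler-2}); and $\bm u\in L^2_\sigma(\Omega)$ makes~(\ref{incompressibility})--(\ref{BC}) automatic.

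The step I expect to demand the most care is not the algebra but the functional-analytic bookkeeping. One must fix a regularity class in which Euler's equation is actually being considered (for instance $\omega\in L^\infty(\Omega)$, so that $\omega\nabla\varphi\in L^2(\Omega)$ and the reconstructed pressure is meaningful) and interpret the nonlinear term from the outset through the pairing $(\{\omega,\mathcal K\omega\},\phi)=(\omega,\{\mathcal K\omega,\phi\})$, so that only first derivatives ever act on $\omega$; the generalized Stokes formula and the cyclic bracket identity then extend from smooth functions by density, and the commutation of $\partial_t$ with $\mathcal K$ must be read in the appropriate weak (distribution-in-time) sense. Once these manipulations are licensed, the equivalence is purely a consequence of the decomposition~(\ref{decomposition_of_L2}) as above.
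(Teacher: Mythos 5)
Your proposal is correct and follows essentially the same route as the paper: test $\partial_t\bm{u}-\bm{u}\times\bm{\omega}+\nabla\tilde{p}$ against $L^2_\sigma(\Omega)$, kill the pressure term via the orthogonal decomposition (\ref{decomposition_of_L2}), represent the test functions as $\nabla\phi\times\uep$ via Lemma \ref{lemma:Clebsch_representation}, and convert the pairing to $(\partial_t\omega-\{\omega,\mathcal{K}\omega\},\phi)$ by integration by parts. The only difference is presentational: the paper does the conversion in one stroke with the generalized Stokes formula $(\bm{F},\nabla\times(\phi\uep))=(\uep\cdot\nabla\times\bm{F},\phi)$, whereas you compute the two terms separately and make the pressure reconstruction in the converse direction explicit.
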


\begin{proof}
In the topology of $L^2_\sigma(\Omega)$, the Euler equation (\ref{Euler-2}) reads as
\begin{equation}
(\partial_t \bm{u} - \bm{u}\times\bm{\omega} +\nabla \tilde{p}, \bm{v})=0
\quad \forall \bm{v}\in L^2_\sigma(\Omega).
\label{Euler-2'}
\end{equation}
By (\ref{decomposition_of_L2}), the left-hand side of (\ref{Euler-2'}) reduces to
$(\partial_t \bm{u} - \bm{u}\times\bm{\omega}, \bm{v})$.
By Lemma\,\ref{lemma:Clebsch_representation}, we may put
$\bm{v}=\nabla\phi\times\uep=\nabla\times(\phi\uep)$ with
$\phi\in H^1_0(\Omega)$.
Finally, plugging this representation into (\ref{Euler-2'}), we obtain
\begin{eqnarray*}
(\partial_t \bm{u} - \bm{u}\times\bm{\omega}, \nabla\times(\phi\uep))
&=&
(\uep\cdot\nabla\times(\partial_t \bm{u} - \bm{u}\times\bm{\omega}),\phi)
\\
&=&
(\partial_t\omega-\{\omega,\varphi \},\phi).
\end{eqnarray*}
Hence, (\ref{Euler-2'}) is equivalent to (\ref{Vorticity_eq-2D-math}).
\end{proof}

\subsection{Hamiltonian}
\label{subsec:Hamiltonian}

Now we consider the Hamiltonian form of the vorticity equation (\ref{Vorticity_eq-2D}) -- to be precise, its
``weak form'' (\ref{Vorticity_eq-2D-math}) (cf.\   Appendix A which treats the Euler equation of (\ref{Euler-2})).

 First we note that the natural choice for the Hamiltonian is $H=\|\bm{u}\|^2/2$, the ``energy'' of the flow $\bm{u}$.
Using $\bm{u}=\nabla\varphi\times\uep$, we may rewrite $H=\|\nabla\varphi\|^2/2 = (\varphi, -\Delta\varphi)/2$.
Selecting the vorticity $\omega$ as the state variable, we define (by relating $\varphi=\mathcal{K}\omega$)
\begin{equation}
H(\omega) = \frac{1}{2} \int_\Omega\!dx\, (\mathcal{K}\omega)\cdot\omega\,,
\label{energy_by_omega}
\end{equation}
which is a continuous functional on $H^{-1}(\Omega)$.  This is equivalent to the square of the norm of $H^{-1}(\Omega)$, i.e.,
the \emph{negative norm} induced by $H^1_0(\Omega)$.

\subsection{Gradient in Hilbert Space}
\label{subsec:gradient}

Next, we consider the gradient of a functional in function space.
Let $\Phi(u)$ be a functional defined on a Hilbert space $X$.
A small perturbation $\epsilon\tilde{u}\in X$ ($|\epsilon|\ll 1$, $\|\tilde{u}\|=1$)
will induce a variation $\delta\Phi(u;\tilde{u})=\Phi(u+\epsilon\tilde{u})-\Phi(u)$.
If there exists a $g\in X^*=X$ such that
$\delta\Phi(u;\tilde{u})=\epsilon(g,\tilde{u}) + O(\epsilon^2)$ for every $\tilde{u}$,
then we define $\partial_u\Phi(u)=g$, and call it the gradient of $\Phi(u)$.
Evidently, the variation $|\delta\Phi(u;\tilde{u})|$ is maximized, at each $u$, by
$\tilde{u}=\partial_u\Phi(u)/\|\partial_u\Phi(u)\|$.
The notion of gradient will be extended for a class of
``rugged'' functionals, which will be used to define singular Casimir elements in Sec.\,\ref{subsec:Casimirs}.
As for the Hamiltonian, however, we may assume it to be a smooth functional.  The pertinent Hilbert space is $L^2(\Omega)$, on which
the Hamiltonian $H(\omega)$ is differentiable;
using the self-adjointness of $\mathcal{K}$, we obtain
\begin{equation}
\partial_\omega H(\omega) = \mathcal{K}\omega.
\label{delH}
\end{equation}
Note that the gradient $\partial_\omega H(\omega)$ may be evaluated for
every $\omega\in H^{-1}(\Omega)$ with the value in $H^1_0(\Omega)$.

\subsection{Noncanonical Poisson Operator}
\label{subsec:Poisson}

Finally, we describe the noncanonical Poisson operator $\mathcal{J}(\omega)$ of \cite{morrison80,morrison81a,morrison82,olver82}.
Formally, we have
\begin{equation}
\mathcal{J}(\omega) \psi = \left[(\partial_y \omega)\partial_x
- (\partial_x \omega)\partial_y \right] \psi = \{ \omega, \psi \}\,,
\label{Vorticity_eq-J}
\end{equation}
which indicates $\omega$ must be a ``differentiable'' function.
However, we will need to reduce this regularity requirement on $\omega$.
Thus, we turn to the \emph{weak formulation} that is amenable to the interpretation
of the evolution in $H^{-1}(\Omega)$ (see Lemma\,\ref{lemma:weak}).
Formally, we may calculate
\begin{equation}
(\mathcal{J}(\omega) \psi, \phi) = (\{\omega,\psi\},\phi) = (\omega, \{\psi,\phi\})\,,
\label{J-operator-1}
\end{equation}
with the right-hand-side finite (well-defined) for $\omega\in C(\Omega)$ and
$\psi, \phi \in H^1_0(\Omega)$.  In fact,
\begin{eqnarray*}
\left|(\omega, \{\psi,\phi\})\right| &\leq& \|\omega\|_{{\sup}} \int_\Omega \!dx\,   |\{\psi,\phi\}|
\\
&\leq& \|\omega\|_{{\sup}} \int_\Omega\!dx\,   |\nabla\psi|\,|\nabla\phi|
\\
&\leq& \|\omega\|_{{\sup}}\,  \|\nabla\psi\|\,\|\nabla\phi\|,
\end{eqnarray*}
where $\|\omega\|_{\sup}={\sup}_{x\in\Omega} |\omega(x)|$.
Hence, we may consider the right-hand-side of (\ref{J-operator-1})
to be a bounded linear functional of $\phi\in H^1_0(\Omega)$,
with $\omega$ and $\psi$ acting as two parameters.  We denote this by $(\omega, \{\psi,\phi\})=: F(\omega,\psi;\phi)$.
By this functional, we ``define'' $\mathcal{J}(\omega)\psi$
on the left-hand-side of (\ref{J-operator-1})
as a member of $H^1_0(\Omega)^*=H^{-1}(\Omega)$, i.e., we put
\[
(\mathcal{J}(\omega)\psi,\phi) \colon =
F(\omega,\psi;\phi) \quad \forall\phi\in H^1_0(\Omega).
\]

For a given $\omega\in C(\Omega)$, we may consider that $\mathcal{J}(\omega)$ is a bounded
linear map operating on $\psi$, i.e., $\mathcal{J}(\omega)\colon H^1_0(\Omega)\rightarrow H^{-1}(\Omega)$.
Evidently, $(\mathcal{J}(\omega)\psi,\phi)=-(\psi,\mathcal{J}(\omega)\phi)$, i.e.,
$\mathcal{J}(\omega)$ is antisymmetric.

\subsection{Hamiltonian Form of Vorticity Equation}
\label{subsec:Hamiltonian-form}

Combining the above definitions of the Hamiltonian
$H(\omega)$, the gradient $\partial_\omega$, and the
noncanonical Poisson operator $\mathcal{J}(\omega)$, we can
write the vorticity equation (\ref{Vorticity_eq-2D}) in the form
\begin{equation}
\partial_t \omega = \mathcal{J}(\omega) \partial_\omega H(\omega).
\label{Vorticity_eq-2DH}
\end{equation}
As remarked in Lemma\,\ref{lemma:weak}, (\ref{Vorticity_eq-2DH}) is
an evolution equation in $H^{-1}(\Omega)$ (cf.\ Appendix A for the $\bm u$ formulation).

For every fixed $\omega\in C(\Omega)$, $\mathcal{J}(\omega)$ may be regarded as a bounded
linear map of $H^1_0(\Omega)\rightarrow H^{-1}(\Omega)$, where the bound changes as a
function of $\omega$.   And $\partial_\omega H(\omega)$ is a bounded linear map of
$H^{-1}(\Omega)\rightarrow H^1_0(\Omega)$.
Hence, each element composing the right-hand side (generator) of the evolution equation
(\ref{Vorticity_eq-2DH}) is separately regular.
However, their nonlinear combination can create a problem:
As noted in Remark \ref{remark:J_operator}, we must evaluate the operator $\mathcal{J}(\omega)$
at the common $\omega$ of $\partial_\omega H(\omega)$.
While $\partial_\omega H(\omega)$ can be evaluated for every $\omega\in H^{-1}(\Omega)$
with its range $= H^1_0(\Omega)$ =
domain of $\mathcal{J}(\omega)$, \emph{if} $\mathcal{J}(\omega)$ is defined; however,
we can define the operator $\mathcal{J}(\omega)$ only for $\omega\in C(\Omega)$.
The difficulty of this nonlinear system is now delineated by the
singular behavior of the Poisson operator $\mathcal{J}(\omega)$ as a function of $\omega$
-- if the orbit $\omega(t)$ (in the
function space $H^{-1}(\Omega)$) runs away so as to increase $\|\omega\|_{\sup}$,
the evolution equation (\ref{Vorticity_eq-2DH}) will breakdown.

To match the combination of $\mathcal{J}(\omega)$ and $\partial_\omega H(\omega)$,
the domain of the total generator $\mathcal{J}(\omega) \partial_\omega H(\omega)$
must be restricted in $C(\Omega)$.
Fortunately, this domain is not too small; the regular (classical) solutions
for an appropriate initial condition lives in this domain,
i.e., if a sufficiently smooth initial condition is given, the
orbit stays in the region where $\|\omega\|_{\sup}$ is bounded \cite{kato1967}.

\section{Casimir Elements}
\label{sec:Casimirs}

\subsection{The Kernel of $\mathcal{J}(\omega)$}
\label{subsec:kernel}

We begin with a general representation of the kernel of the noncanonical
Poisson operator $\mathcal{J}(\omega)$, which will be a subset of its domain $H^1_0(\Omega)$ (see Sec.~\ref{subsec:Poisson}).

\begin{lem}
\label{lemma:kernel}
For a given $\omega\in C(\Omega)$, $\psi\in H^1_0(\Omega)$
is an element of $\emph{Ker}(\mathcal{J}(\omega))$, iff there is $\theta\in H^1(\Omega)$ such that
\begin{equation}
\omega\nabla\psi = \nabla\theta.
\label{Kernel-general'}
\end{equation}
This implies that
\begin{equation}
\emph{Ker}(\mathcal{J}(\omega)) =
\{ \psi \in H^1_0(\Omega)\sep \omega\nabla\psi \in L^2_\sigma(\Omega)^\perp \}.
\label{kernel-general}
\end{equation}
\end{lem}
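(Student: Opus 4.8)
The plan is to characterize $\mathrm{Ker}(\mathcal{J}(\omega))$ by unwinding the weak definition of $\mathcal{J}(\omega)$ from Sec.~\ref{subsec:Poisson}. By construction, $\psi\in H^1_0(\Omega)$ lies in $\mathrm{Ker}(\mathcal{J}(\omega))$ iff $F(\omega,\psi;\phi)=(\omega,\{\psi,\phi\})=0$ for all $\phi\in H^1_0(\Omega)$. Using the antisymmetry manipulation already employed in the proofs of Lemma~\ref{lemma:Clebsch_representation} and Lemma~\ref{lemma:weak}, I would rewrite $(\omega,\{\psi,\phi\})$ as an integral of the form $\int_\Omega \omega\,(\nabla\psi\times\nabla\phi)\cdot\uep\,dx$, and then recognize $\omega\nabla\psi$ (as a vector field, rotated by $\uep\times(\cdot)$, i.e.\ $\omega\nabla\psi\times\uep$) as being tested against $\nabla\phi$. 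Concretely, $(\omega,\{\psi,\phi\}) = -\int_\Omega (\omega\nabla\psi\times\uep)\cdot\nabla\phi\,dx = (\omega\nabla\psi\times\uep, \nabla\phi)$ up to a sign. Hence the kernel condition reads: $\omega\nabla\psi\times\uep$ is $L^2$-orthogonal to $\{\nabla\phi\sep \phi\in H^1_0(\Omega)\}$.

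Next I would translate this orthogonality into the two stated forms. First note $\omega\nabla\psi\in L^2(\Omega)$: indeed $\omega\in C(\Omega)$ is bounded and $\nabla\psi\in L^2(\Omega)$, so $\omega\nabla\psi\times\uep\in L^2(\Omega)$. The orthogonal decomposition \eqref{decomposition_of_L2} of $L^2(\Omega)$ is $L^2_\sigma(\Omega)\oplus\{\nabla\theta\sep\theta\in H^1(\Omega)\}$; however the condition we derived involves orthogonality only to $\{\nabla\phi\sep\phi\in H^1_0(\Omega)\}$, a subspace of the gradient part. I would clarify which complement is meant: in a simply connected $\Omega$, a vector field orthogonal to all $\nabla\phi$ with $\phi\in H^1_0(\Omega)$ is precisely a (weakly) divergence-free field, with no boundary constraint imposed — equivalently, its rotated version $\omega\nabla\psi$ is a gradient $\nabla\theta$ with $\theta\in H^1(\Omega)$. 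This is the content of \eqref{Kernel-general'}. So the chain is: $\psi\in\mathrm{Ker}(\mathcal{J}(\omega))$ $\iff$ $\omega\nabla\psi\times\uep$ is weakly divergence-free $\iff$ $\omega\nabla\psi\times\uep\in L^2_\sigma(\Omega)^\perp$ after accounting for the boundary-trace subtlety (here one must be careful: $L^2_\sigma(\Omega)^\perp = \{\nabla\theta\sep\theta\in H^1(\Omega)\}$, so \eqref{kernel-general} asserts $\omega\nabla\psi\times\uep\in\{\nabla\theta\}$, matching \eqref{Kernel-general'} with the $\uep\times(\cdot)$ rotation absorbed). I would state the rotation convention explicitly so \eqref{Kernel-general'} and \eqref{kernel-general} are manifestly the same assertion.

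For the converse directions, given \eqref{Kernel-general'} with some $\theta\in H^1(\Omega)$, I would simply reverse the integration by parts: $(\omega,\{\psi,\phi\}) = (\omega\nabla\psi\times\uep,\nabla\phi) = (\nabla\theta\times\uep,\nabla\phi)$ up to sign, and $\nabla\theta\times\uep = \nabla\times(\theta\uep)$ is divergence-free with its normal trace paired against $\phi|_{\gam}=0$, so this vanishes for all $\phi\in H^1_0(\Omega)$; hence $\psi\in\mathrm{Ker}(\mathcal{J}(\omega))$. The main obstacle — and the step requiring the most care — is the regularity/boundary bookkeeping in the equivalence between "$\omega\nabla\psi\times\uep$ annihilates all $\nabla\phi$, $\phi\in H^1_0(\Omega)$" and "$\omega\nabla\psi = \nabla\theta$ for some $\theta\in H^1(\Omega)$": this is a de Rham / Poincaré-type statement (a curl-free — in the appropriate rotated sense — $L^2$ field on a simply connected planar domain is a gradient of an $H^1$ function), and one should cite or invoke the same simple-connectedness argument used at the end of the proof of Lemma~\ref{lemma:Clebsch_representation}, together with the characterization of $L^2_\sigma(\Omega)^\perp$ from \eqref{decomposition_of_L2}. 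Everything else is routine integration by parts with the boundary terms killed by $\phi\in H^1_0(\Omega)$.
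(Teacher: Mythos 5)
Your proposal is correct and follows essentially the same route as the paper's proof: unwind the weak definition of $\mathcal{J}(\omega)$ to get $(\omega\nabla\psi,\nabla\phi\times\uep)=0$ for all $\phi\in H^1_0(\Omega)$, use Lemma \ref{lemma:Clebsch_representation} to identify the test fields $\nabla\phi\times\uep$ with all of $L^2_\sigma(\Omega)$, and read off (\ref{Kernel-general'}) and (\ref{kernel-general}) from the decomposition (\ref{decomposition_of_L2}); your detour through ``weakly divergence-free'' is an equivalent rephrasing of that last step. The only bookkeeping slip is that it is $\omega\nabla\psi$ itself, not $\omega\nabla\psi\times\uep$, that lies in $L^2_\sigma(\Omega)^\perp=\{\nabla\theta\sep\theta\in H^1(\Omega)\}$; the rotated field $\omega\nabla\psi\times\uep$ is the one orthogonal to $\{\nabla\phi\sep\phi\in H^1_0(\Omega)\}$, i.e.\ weakly divergence-free.
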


\begin{proof}
By the definition (\ref{Vorticity_eq-J}), $\psi \in \mathrm{Ker}(\mathcal{J}(\omega))$ implies
$\{\omega, \psi \}=0$ in the topology of $H^{-1}(\Omega)$, i.e.,
\begin{equation}
(\{\omega, \psi \},\phi) \equiv -(\omega\nabla\psi,\nabla\phi\times\uep)
=0 \quad \forall \phi\in H^1_0(\Omega).
\label{Kernel}
\end{equation}
By Lemma\,\ref{lemma:Clebsch_representation}, (\ref{Kernel}) implies that
\begin{equation}
(\omega\nabla\psi,\bm{v}) =0
\quad \forall \bm{v}\in L^2_\sigma(\Omega).
\label{Kernel''}
\end{equation}
Remembering (\ref{decomposition_of_L2}),
we obtain (\ref{Kernel-general'}) and (\ref{kernel-general}).
\end{proof}

To construct a Casimir element from $\psi\in\mathrm{Ker}(\mathcal{J}(\omega))$,
we will need a more ``explicit'' relation between $\omega$ and $\psi$.  We will show how such a relation is available for a sufficiently regular $\omega$.

Let us start by assuming $\omega\in C^1(\Omega)$.  Then, we may evaluate
$\mathcal{J}(\omega)\psi$ as
$\{\omega,\psi\}\equiv -\nabla\omega\times\nabla\psi\cdot\uep \in L^2(\Omega)$.
Therefore, $\psi\in H^1_0(\Omega)$ belongs to $\mathrm{Ker}(\mathcal{J}(\omega))$, iff
\begin{equation}
\{\omega,\psi\}=0 \quad [\in L^2(\Omega)].
\label{Kernel-general''}
\end{equation}
Equation (\ref{Kernel-general''}) implies that two vectors
$\nabla\omega\in C(\Omega)$ and $\nabla\psi\in L^2(\Omega)$
must align almost everywhere in $\Omega$, excepting any ``region"
in which one of them is zero.
Such a relationship between $\omega$ and $\psi$ can be represented, by invoking
a certain scalar $\zeta(x,y)$, as
\begin{equation}
\omega=f(\zeta), \quad \psi=g(\zeta).
\label{Kernel-general''-solution}
\end{equation}
The simplest solution is given by $\psi=g(\omega)$ (i.e., $f=$ identity).
In later discussion, we shall invoke a nontrivial $f$ to represent a wider class of solutions.

We note that the condition $\psi\in H^1_0(\Omega)$ implies the boundary condition $\psi|_{\gam} =0$.
If $\omega\neq$ constant on some $\gam'\subseteq\gam$, integrating (\ref{Kernel-general''})
with this boundary condition yields $\psi\equiv0$ along every contour of $\omega$ which intersects $\gam'$
[the contours of $\omega$ are the Cauchy characteristics of (\ref{Kernel-general''}), and
$\psi|_{\gam'}=0$ poses a non-characteristic \emph{initial condition}].
We denote by $\Omega_\circ(\omega)$
the largest region in $\Omega$ (not necessarily a connected set)
which is bounded by a level set (contour) of $\omega$.  See Fig.~\ref{fig:schematic}.
If $\omega|_{\gam} \neq$ constant, $\Omega_\circ(\omega)$ is smaller than $\Omega$,
and then, every level set of $\omega$ in $\Omega\setminus\Omega_\circ(\omega)$ intersects
$\gam $.
Hence, $\mathrm{supp}(\psi) :=$ closure of $\{x\in\Omega\sep \psi(x)\neq 0\}
\subseteq\Omega_\circ(\omega)$.
We shall assume that $\Omega_\circ(\omega)\neq\emptyset$
for the existence of nontrivial $\psi\in\mathrm{Ker}(\mathcal{J}(\omega))$.

\begin{figure}[tb]
\begin{center}
  \includegraphics[width=0.5\textwidth]{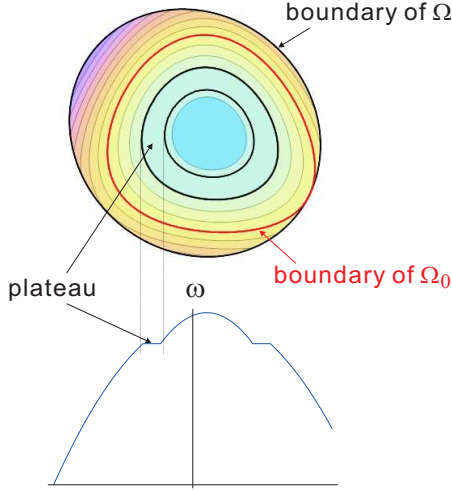}
\end{center}
\caption{Depiction of $\Omega_\circ(\omega)$, the largest subset of $\Omega$ bounded by an $\omega$ level set.  Note, the complement of $\Omega_\circ(\omega)$ contains level sets that intersect $\gam$.  Also depicted is the plateau set $\Omega_p\subset\Omega$, a region where $\omega$=constant.
}
\label{fig:schematic}
\end{figure}

Now we make a moderate generalization about the regularity:
Suppose that $\omega$ is Lipschitz continuous, i.e., $\omega\in C^{0,1}(\Omega)$.
Then, $\omega$ has a classical gradient $\nabla\omega$ almost everywhere in $\Omega$ (see e.g.\ \cite{stein}),
and $|\nabla\omega|$ is bounded.  Note, $\omega$ may fail to have a classical $\nabla\omega$ on a measure-zero subset $\Omega_s\subset\Omega$, but for $\bm{x}\in\Omega_s$ we may define a set-valued \emph{generalized gradient} (see \cite{Clarke1975}).
With a Lipschitz continuous function $g\colon\R\rightarrow\R$
we can solve (\ref{Kernel-general'}) by
\begin{equation}
\psi = g(\omega), \quad \theta=\vartheta(\omega),
\label{Kernel-1}
\end{equation}
where $\vartheta(\xi) = g(\xi)\xi - \int \! d\xi \, g(\xi)$.
To meet the boundary condition $\psi|_{\gam} =0$, $g$ must satisfy
\begin{equation}
g(\omega(\bm{x})) = 0 \quad \forall \bm{x}\in \gam .
\label{Kernel-summary-b}
\end{equation}

However, the solution (\ref{Kernel-1}) omits a different type of solution that emerges with
a \emph{singularity} of $\mathcal{J}(\omega)$: If $\omega$ has a ``plateau,''
i.e., $\omega=\omega_0$ (constant) in a finite
region $\Omega_p\subseteq\Omega$ (see Fig.~\ref{fig:schematic}), the operator $\mathcal{J}(\omega)$
trivializes as $\mathcal{J}(\omega)=\{\omega_0, \cdot\}=0$ in $\Omega_p$
(i.e., the ``rank'' drops to zero; recall the example of Sec.~\ref{sec:introduction}),
and within $\Omega_p$ we can solve (\ref{Kernel-general'}) by
an arbitrary $\psi$ with $\theta=\omega_0\psi$.
Notice that the solution (\ref{Kernel-1}) restricts
$\psi=g(\omega_0)$ = constant in $\Omega_p$.
To remove this degeneracy, we abandon the continuity of $g$ and, for simplicity,
assume that $\omega$ has only a single plateau.
First we invoke the reversed form [cf.\  (\ref{Kernel-general''-solution})]:
\begin{equation}
\omega = f(\psi)\,,
\label{Kernel-2}
\end{equation}
where we assume that $f$ is a Lipschitz continuous monotonic function.
Denoting $\vartheta(\eta)=\int\! d\eta\, f(\eta)$,
we may write $\omega\nabla\psi=\nabla \vartheta(\psi)$, where the gradients on both sides evaluate classically almost everywhere in $\Omega$ if $\psi$ is Lipschitz continuous.
If the function $f(\psi)$ is flat on some interval, $f(\psi)=\omega_0$ = constant for $\psi_- < \psi < \psi_+$,
a plateau appears in the distribution of $\omega$.  See Figs.~\ref{fig:schematic} and \ref{fig:plateau}(a).
Since the present mission is to find $\psi$ for a given $\omega$,
we transform (\ref{Kernel-2}) back to (\ref{Kernel-1}) with the definition  $g=f^{-1}$.  (In Sec.~\ref{sec:equilibrium}, however, we will seek an equilibrium $\omega$ that is characterized by a Casimir element, and then, the form (\ref{Kernel-2}) will be
invoked again). A plateau in the graph of $f$ will, then, appear as a ``jump''
in the graph of $g$. See Fig.~\ref{fig:plateau}(b).

\begin{figure}[tb]
\begin{center}
  \includegraphics[width=0.8\textwidth]{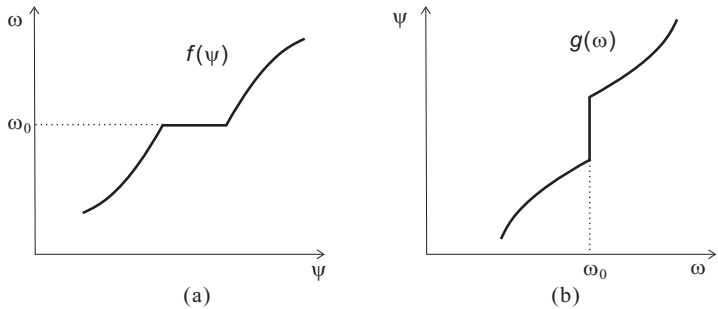}
\end{center}
\caption{Depiction of the dual relationships between $\psi$ and $\omega$.
(a) The relation $\omega=f(\psi)$ with the plateau singularity.
(b) The singular (discontinuous) function $\psi=g(\omega)$ illustrating the kernel element
that stems from a plateau of $\omega$.
}
\label{fig:plateau}
\end{figure}

We now allow the function $g(\omega)$ to have a jump at $\omega_0=\omega|_{\Omega_p}$.
Formally we write $g(\omega)=g_L(\omega) + \alpha{Y}(\omega-\omega_0)$, with  $g_L(\omega)$
a Lipschitz continuous function, ${Y}(\omega-\omega_0)$ the step function,
 and $\alpha$ a constant determining the width of the jump.
We connect the graph of the step function by filling the gap between
$\lim_{\omega\uparrow\omega_0}g(\omega)=\psi_-$ and
$\lim_{\omega\downarrow\omega_0}g(\omega)=\psi_+ = \psi_-+\alpha$. See Fig.~\ref{fig:plateau}(b).
Since $g(\omega)$ is multi-valued at $\omega=\omega_0$, $\psi(\bm{x})=g(\omega(\bm{x}))$
is arbitrary in the range of $[\psi_-,\psi_+]$, i.e.\ in $\Omega_p$.
Choosing a sufficiently smooth $\psi$ in $\Omega_p$, we may assume $\psi\in H^1(\Omega)$.

Summarizing the above discussions (and making an obvious generalization), we have

\begin{thm}
\label{theorem:kernel2}
Suppose that $\omega\in C^{0,1}(\Omega)$ and $\Omega_\circ(\omega)\neq\emptyset$.
Then $\mathrm{Ker}(\mathcal{J}(\omega))$ contains nontrivial elements, and
a part of them can be represented as
\begin{equation}
\psi = g(\omega),
\label{Kernel-summary}
\end{equation}
where $g(\xi)$ is an arbitrary function that satisfies the boundary condition
(\ref{Kernel-summary-b}) and such that
\begin{equation}
g(\xi) = g_L(\xi) + \sum_{\ell=1}^\nu \alpha_\ell {Y}(\xi-\omega_\ell),
\label{g-form}
\end{equation}
where $\omega_\ell$ denotes the height of a plateau of $\omega$,
${Y}(\xi)$ is the ``filled'' step function, and $\alpha_\ell$ is a constant.
\end{thm}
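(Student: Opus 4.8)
The plan is to check the two defining properties of $\psi=g(\omega)$ separately: that it is a legitimate element of the domain $H^1_0(\Omega)$ of $\mathcal J(\omega)$, and that it is annihilated by $\mathcal J(\omega)$. For the annihilation I would not argue with the operator directly but invoke Lemma~\ref{lemma:kernel}: by (\ref{kernel-general}) it suffices to exhibit a $\theta\in H^1(\Omega)$ with $\omega\nabla\psi=\nabla\theta$ almost everywhere in $\Omega$. Writing $g$ in the form (\ref{g-form}) I would handle its Lipschitz part $g_L$ and its jump terms $\alpha_\ell Y(\cdot-\omega_\ell)$ separately, exploiting the linearity of the condition $\omega\nabla(\,\cdot\,)=\nabla(\,\cdot\,)$ in $\psi$ together with additivity of the associated $\theta$.

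\textbf{The $\theta$-construction.} For the Lipschitz part, $g_L(\omega)$ is globally Lipschitz on $\Omega$ (a composition of Lipschitz maps), hence differentiable a.e.\ by Rademacher's theorem, and the chain rule gives $\nabla g_L(\omega)=g_L'(\omega)\nabla\omega$ a.e.; therefore $\omega\nabla g_L(\omega)=\nabla\Theta(\omega)$ with $\Theta$ any primitive of $\xi\mapsto\xi g_L'(\xi)$, e.g.\ $\Theta(\xi)=\xi g_L(\xi)-\int_0^\xi g_L$, which is exactly the $\vartheta$ occurring in (\ref{Kernel-1}). For a single jump term, let $\Omega_{p,\ell}$ be the plateau at height $\omega_\ell$ ($\omega\equiv\omega_\ell$ there) and let $\chi_\ell:=Y(\omega-\omega_\ell)$ denote the filled step: $\chi_\ell=1$ on $\{\omega>\omega_\ell\}$, $\chi_\ell=0$ on $\{\omega<\omega_\ell\}$, and on $\Omega_{p,\ell}$ a smooth interpolation matching these one-sided values on $\partial\Omega_{p,\ell}$, chosen so that $\chi_\ell\in H^1(\Omega)$. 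Off the plateau $\chi_\ell$ is locally constant, so $\omega\nabla\chi_\ell=0=\nabla 0$; on the plateau $\omega\nabla\chi_\ell=\omega_\ell\nabla\chi_\ell=\nabla(\omega_\ell\chi_\ell)$; and a check of the one-sided values shows $\omega\nabla\chi_\ell=\nabla(\omega_\ell\chi_\ell)$ in fact holds globally, with $\omega_\ell\chi_\ell\in H^1(\Omega)$. Summing over the (finitely many) plateaus, $\psi=g(\omega)\in H^1(\Omega)$ and $\omega\nabla\psi=\nabla\theta$ with $\theta=\Theta(\omega)+\sum_{\ell=1}^\nu\alpha_\ell\,\omega_\ell\chi_\ell\in H^1(\Omega)$, so Lemma~\ref{lemma:kernel} gives $\psi\in\mathrm{Ker}(\mathcal J(\omega))$.

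\textbf{Boundary condition and nontriviality.} Since $\psi\in H^1(\Omega)$, membership in $H^1_0(\Omega)$ is equivalent to $\psi|_{\gam}=0$, i.e.\ $g(\omega(\bm x))=0$ for $\bm x\in\gam$, which is precisely the imposed condition (\ref{Kernel-summary-b}). For nontriviality, recall from the discussion of $\Omega_\circ(\omega)$ that every level set of $\omega$ lying in $\Omega\setminus\Omega_\circ(\omega)$ meets $\gam$; hence if $g$ is taken to vanish outside the value range $\omega(\Omega_\circ(\omega))$, then (\ref{Kernel-summary-b}) is automatic and $\mathrm{supp}(\psi)\subseteq\overline{\Omega_\circ(\omega)}$. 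Because $\Omega_\circ(\omega)\neq\emptyset$ one can choose such a $g$ not identically zero on $\omega(\Omega_\circ(\omega))$ — a Lipschitz bump for a regular kernel element, or a single term $\alpha_\ell Y(\cdot-\omega_\ell)$ concentrated on a plateau for a genuinely singular one — which produces a nontrivial $\psi\in\mathrm{Ker}(\mathcal J(\omega))$, completing the proof.

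\textbf{Main obstacle.} The delicate step is the assertion that the filled step $\chi_\ell$, and hence $\psi=g(\omega)$, really lies in $H^1(\Omega)$ despite the discontinuity of $g$ at $\omega_\ell$. This is tenable only because the jump of $g$ at $\omega_\ell$ is absorbed by the level set $\{\omega=\omega_\ell\}$, which up to a null set is the plateau $\Omega_{p,\ell}$ — a set of \emph{positive} measure on which $\nabla\omega=0$ — rather than a thin hypersurface, across which a jump would contribute a singular part to the distributional gradient and destroy $H^1$ membership. Making this rigorous requires controlling the level-set structure of the Lipschitz function $\omega$ (clean, for instance, when one starts from a monotone relation $\omega=f(\psi)$ as in (\ref{Kernel-2}), whose flat spots separate $\{\psi<\psi_-\}$ from $\{\psi>\psi_+\}$), together with the routine but fussy justification of the chain rule for Lipschitz compositions and of the $H^1$-gluing of $\chi_\ell$ across $\partial\Omega_{p,\ell}$; the phrase ``obvious generalization'' in the statement refers only to allowing finitely many plateaus and summing their contributions.
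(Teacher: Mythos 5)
Your proposal is correct and follows essentially the same route as the paper: reduction to the condition $\omega\nabla\psi=\nabla\theta$ via Lemma~\ref{lemma:kernel}, the primitive $\vartheta(\xi)=\xi g(\xi)-\int\!d\xi\,g(\xi)$ for the Lipschitz part exactly as in (\ref{Kernel-1}), and absorption of each jump of $g$ by a smooth $H^1$ interpolation of the filled step across the corresponding plateau, where $\nabla\omega=0$ and $\omega\equiv\omega_\ell$. Your explicit additive splitting of $\theta$ over the jump terms and your identification of the $H^1$-membership of $Y(\omega-\omega_\ell)$ (i.e.\ that the level set $\{\omega=\omega_\ell\}$ must essentially coincide with the positive-measure plateau) as the delicate point make explicit precisely what the paper handles by deriving $g=f^{-1}$ from a monotone Lipschitz relation $\omega=f(\psi)$ and then "summarizing the above discussions."
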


\begin{rem}
\label{remark:singular_Casimir}
Notice that $\mathrm{Ker}(\mathcal{J}(\omega))$ is enlarged by the singular components
$\alpha_\ell {Y}(\xi-\omega_\ell)$ stemming from $\omega$ (the singular point in the phase space $H^{-1}(\Omega)$)
that has plateaus $\omega=\omega_\ell$.
Away from the singular point, the plateaus shrink to zero-measure sets in $\Omega$ and, then,
$\psi=g(\omega)$ can no longer be a member of the domain of $\mathcal{J}(\omega)$.
However, it is still a ``hyperfunction solution'' of $\{\omega,\psi\}=0$,
since $\nabla\omega$ parallels the delta-function $\nabla\psi$ at $\omega=\omega_\ell$.
The corresponding Casimir element, to be constructed in Sec.~\ref{subsec:Casimirs},
is what we call a ``singular Casimir element'' (remember the elementary example
discussed in Sec.~\ref{sec:introduction};
see also Fig.~\ref{fig:foliation}).
\end{rem}

\begin{rem}
\label{remark:J_operator-2}
In (\ref{g-form}), the function $g(\xi)$ may be chosen arbitrarily to define
an infinite number of kernel elements $\psi=g(\omega)$ satisfying (\ref{Kernel-general'}).
To find kernel elements (and in the following step of finding Casimir elements),
we solve a linear equation $\mathcal{J}(\omega) \psi=0$ (and $\mathcal{J}(\omega) \partial_\psi C(\psi)=0$)
for a given $\omega$; see Remark\,\ref{remark:J_operator}.
In the analysis of ``equilibrium points'' (Sec.\,\ref{sec:equilibrium}), however,
we will relate $\psi\in\mathrm{Ker}(\mathcal{J}(\omega))$ and $\omega$ by another
defining relation $-\Delta\psi=\omega$ so as to make $\psi$ the Clebsch potential of
$\omega$ (see (\ref{operator-K})).
Then, $g(\xi)$ is provided as a data specifying a Casimir leaf on which we seek an equilibrium point.
\end{rem}

\begin{rem}
\label{remark:more_general_solutions}
Clearly the form (\ref{g-form}) of $g(\omega)$ is rather restrictive:

(i) In the plateau region, (\ref{Kernel-general'}) has a wider class of solutions.
In fact, $\psi$ may be an arbitrary $H^1$-class function whose range may exceed
the interval $[\psi_-,\psi_+]$.  In this case, the graph of $g(\omega)$
has a ``thorn'' at $\omega_0$, and we may not integrate such a function to define a Casimir element
$G(\omega)$. (See Sec.~\ref{subsec:Casimirs}.)

(ii) In (\ref{g-form}), we restrict the continuous part $g_L(\xi)$ to be Lipschitz continuous,
by which $\psi=g(\omega)$ ($\omega\in C^{0,1}(\Omega)$) is assured of Lipschitz continuity
(thus, $\psi\in H^1(\Omega)$).
However, this condition may be weakened, depending on the specific $\omega$, i.e., it is only required that $g'(\omega)\nabla\omega\in H^1(\Omega)$.
\end{rem}

\subsection{Construction of Casimir Elements}
\label{subsec:Casimirs}

Our next mission is to ``integrate'' the kernel element
$\psi\in\mathrm{Ker}(\mathcal{J}(\omega))$ as a function of $\omega$,
and define a Casimir element $C(\omega)$,
i.e., to find a functional $C(\omega)$ such that
$\partial_{\omega} C(\omega)\in\mathrm{Ker}(\mathcal{J}(\omega))$.
To this end, the parameterized $\psi$ of (\ref{Kernel-summary}) will be used,
where the function $g(\omega)$ may have singularities as described in Theorem\,\ref{theorem:kernel2}.
The central issue of this section, then, will be to consider an appropriate
``generalized functional derivative'' by which we can define ``singular
Casimir elements'' pertinent to the singularities of the noncanonical
Poisson operator $\mathcal{J}(\omega)$.

Let us start by considering a \emph{regular} Casimir element generated by
$g(\xi)\in C(\R)$:
\begin{equation}
C_G(\omega) = \int_{\Omega} \!dx\,  G(\omega)\,,
\label{omega-Casimir}
\end{equation}
where $G(\xi) := \int\!  d\xi\, g(\xi)$.
The gradient of this functional can be readily calculated with the definition of
Sec.~\ref{subsec:gradient}:
Perturbing $\omega$ by $\epsilon\tilde{\omega}$ results in
\[
\delta C_G(\omega,\tilde{\omega})
= \epsilon\int_\Omega \!dx\,  g(\omega) \tilde{\omega} + O(\epsilon^2)\, .
\]
Hence, we obtain $\partial_\omega C_G(\omega)=g(\omega)$, proving that
$C_G(\omega)$ of (\ref{omega-Casimir}) is the Casimir element corresponding to
$g(\omega) \in \mathrm{Ker}(\mathcal{J}(\omega))$.

Now we construct a \emph{singular} Casimir element corresponding to
a general $g(\omega)$ that may have ``jumps'' at the singularity of $\mathcal{J}(\omega)$, i.e.,
 the plateaus of $\omega$.
The formal primitive function $G(\xi)$ of such a $g(\xi)$ has ``kinks''
where the classical differential does not apply
---this problem leads to the
requirement of an appropriately generalized gradient of
the functional $C_G(\omega) = \int_{\Omega}\!dx\,  G(\omega)$ generated by a kinked $G(\xi)$.

Here we invoke the \emph{Clarke gradient}\,\cite{Clarke1975}, which is a
 generalized gradient for Lipschitz-continuous functions or functionals.  Specifically, if  $F\colon\R\rightarrow\R$,  then the Clarke gradient of $F$ at $x$, denoted by $\tilde{\partial}_x F(x)$, is defined to be the convex hull of the set of limit points of the form
\begin{equation}
\lim_{j\rightarrow\infty} \partial_x F(x+\delta_j)
\quad {\rm with} \quad \lim_{j\rightarrow\infty} \delta_j = 0.
\label{Clarke1}
\end{equation}
Evidently, $\tilde{\partial}_x F(x)$ is equivalent to the classical gradient, $\partial_x F(x)$,
if $F(x)$ is continuously differentiable in the neighborhood of $x$.
Also, it is evident that a ``kink'' in $F$ yields  $\tilde{\partial}_x F(x)$ with a graph that has a ``jump'' with the gap filled as depicted in Fig.~\ref{fig:plateau}(b).
When $F(u)$ is a convex functional on a Hilbert space $X$, i.e.\ $F\colon X\rightarrow \R$, then
$\tilde{\partial}_u F(u)$ is equal to the \emph{sub-differential}:
\begin{equation}
\tilde{\partial}_u F\colon
u \mapsto \{ g\sep F(u+\delta)-F(u)\geq (g,u),~\forall \delta\in X \} ,
\label{subdifferential}
\end{equation}
which gives the \emph{maximally monotone} (i.e., the gap-filled) function \cite{Brezis}.
For the purpose of Sec.~\ref{sec:equilibrium}, a monotonic $g(\xi)=\tilde{\partial}_\xi G(\xi)$
will be sufficient.

From the above, the following conclusion is readily deducible:
\begin{cor}
\label{corollary:Casimir}
Suppose that $\omega\in C^{0,1}(\Omega)$ and $\Omega_\circ(\omega)\neq\emptyset$.
By $g(\xi)$ satisfying (\ref{Kernel-summary-b}) and (\ref{g-form}),
we define $G(\xi)$ such that $g(\xi)=\tilde{\partial}_\xi G(\xi)$.
Then, $C_G(\omega)=\int_{\Omega}\!dx\,  G(\omega)$ is a generalized Casimir element, i.e.,
$\tilde{\partial}_\omega C_G(\omega) \in \mathrm{Ker}(\mathcal{J}(\omega))$.
\end{cor}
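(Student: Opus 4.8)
The plan is to combine the kernel characterization of Theorem~\ref{theorem:kernel2} with the gradient computation for $C_G$ that was sketched just before the statement, replacing the classical gradient by the Clarke gradient wherever $G$ has a kink. First I would record the decomposition $g(\xi) = g_L(\xi) + \sum_{\ell=1}^\nu \alpha_\ell Y(\xi-\omega_\ell)$ from (\ref{g-form}) and observe that its formal primitive is $G(\xi) = G_L(\xi) + \sum_{\ell=1}^\nu \alpha_\ell (\xi-\omega_\ell)_+$, i.e.\ a $C^1$ piece $G_L$ (since $g_L$ is Lipschitz, hence continuous) plus a finite sum of ``ramp'' functions, each of which is convex with a single kink at $\xi=\omega_\ell$. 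Thus $G$ is Lipschitz continuous and its Clarke gradient $\tilde\partial_\xi G(\xi)$ is single-valued and equal to $g(\xi)$ everywhere except at the points $\omega_\ell$, where by (\ref{Clarke1}) it is the filled interval $[\,\lim_{\xi\uparrow\omega_\ell} g(\xi),\ \lim_{\xi\downarrow\omega_\ell} g(\xi)\,]$ — exactly the gap-filled (maximally monotone) object described around (\ref{subdifferential}) and in Fig.~\ref{fig:plateau}(b).

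Next I would lift this to the functional $C_G(\omega) = \int_\Omega dx\, G(\omega)$ on the Hilbert space $L^2(\Omega)$. Perturbing $\omega\mapsto\omega+\epsilon\tilde\omega$ and using that $G$ is Lipschitz (so the difference quotient is dominated and we may pass to the limit under the integral sign), the one-sided directional derivative of $C_G$ at $\omega$ in direction $\tilde\omega$ is $\int_\Omega dx\, \big(\tilde\partial_\xi G\big)(\omega)\,\tilde\omega$, interpreted via a measurable selection of the set-valued $\tilde\partial_\xi G$; wherever $\omega(\bm{x}) \neq \omega_\ell$ for all $\ell$ this selection is forced to equal $g(\omega(\bm{x}))$. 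Hence $\tilde\partial_\omega C_G(\omega)$ contains the element $\psi:=g(\omega)$, where on each plateau set $\{\omega=\omega_\ell\}$ we are free to choose $\psi$ to be any measurable function valued in $[\psi_-^{(\ell)},\psi_+^{(\ell)}]$; choosing it smooth there (as in the paragraph preceding Theorem~\ref{theorem:kernel2}) makes $\psi\in H^1(\Omega)$. This is precisely the representative singled out in Theorem~\ref{theorem:kernel2}.

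Finally I would invoke Theorem~\ref{theorem:kernel2} directly: since $\omega\in C^{0,1}(\Omega)$, $\Omega_\circ(\omega)\neq\emptyset$, and $g$ satisfies the boundary condition (\ref{Kernel-summary-b}) and has the form (\ref{g-form}), the function $\psi=g(\omega)$ constructed above lies in $\mathrm{Ker}(\mathcal{J}(\omega))$. Combining the two facts — $\psi\in\tilde\partial_\omega C_G(\omega)$ and $\psi\in\mathrm{Ker}(\mathcal{J}(\omega))$ — gives $\tilde\partial_\omega C_G(\omega)\cap\mathrm{Ker}(\mathcal{J}(\omega))\neq\emptyset$, which is the asserted sense in which $C_G$ is a generalized Casimir element. (If one prefers the monotone case flagged in the text, one additionally assumes $g_L$ nondecreasing, so that $g$ is monotone, $G$ is convex, and by (\ref{subdifferential}) the Clarke gradient coincides with the subdifferential, which is exactly the gap-filled graph; then $\tilde\partial_\omega C_G(\omega)$ is itself the maximal monotone relation whose graph is spanned by these kernel elements.)

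The main obstacle I expect is the second step: justifying the interchange of the Clarke gradient with the integral, i.e.\ that $\tilde\partial_\omega\big(\int_\Omega G(\omega)\,dx\big)$ is realized by pointwise selections of $\tilde\partial_\xi G(\omega(\bm{x}))$ and that such a selection can be chosen in $H^1(\Omega)$ rather than merely in $L^2(\Omega)$ or $L^\infty(\Omega)$. This is a chain-rule/measurable-selection statement for Clarke gradients of integral functionals, and the $H^1$-regularity on the plateau hinges on the freedom to mollify $\psi$ there while keeping $\omega\nabla\psi=\nabla\vartheta(\psi)$ (trivially $\nabla(\omega_\ell\psi)$) on $\Omega_p$; off the plateaus one uses that $g_L$ Lipschitz and $\omega\in C^{0,1}$ force $g_L(\omega)\in C^{0,1}\subset H^1$. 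These points are all implicit in the discussion leading up to the corollary, so at the level of this proof they can be cited rather than re-derived.
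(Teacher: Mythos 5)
Your proposal is correct and follows essentially the same route as the paper, which states the corollary as ``readily deducible'' from the preceding computation of $\partial_\omega C_G(\omega)=g(\omega)$ for regular $g$, the Clarke-gradient gap-filling at the kinks of $G$, and Theorem~\ref{theorem:kernel2}. Your elaboration of the chain-rule/measurable-selection step and the $H^1$ regularity of the selection on the plateaus only makes explicit what the paper leaves implicit.
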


\section{Extremal Points}
\label{sec:equilibrium}

\subsection{Extremal Points and Casimir Elements}
\label{subsec:fixed-points}
The Casimir element $C_G(\omega)$ naturally extends the set of extremal equilibrium points, the set of solutions of the dynamical
system (\ref{Vorticity_eq-2DH}) that are both equilibrium solutions and energy-Casimir extremal points.  This is done by including extremal points satisfying [cf. (\ref{stationary_points2})]
\begin{equation}
\tilde{\partial}_\omega [H(\omega) + C_G(\omega)] \ni 0,
\label{fixed_ponit}
\end{equation}
which explicitly gives
\begin{equation}
\mathcal{K}\omega \in -  g(\omega).
\label{Kernel-1-fided_point}
\end{equation}
Here we assume that $g(\xi)=\tilde{\partial}_\xi G(\xi)$ is a monotonic function, i.e., $G(\xi)$ is convex.
Then, we may define a single-valued continuous function $f= (- g)^{-1}$ and rewrite (\ref{Kernel-1-fided_point}),
denoting $\varphi=\mathcal{K}\omega$, as
\begin{equation}
-\Delta \varphi = f(\varphi).
\label{Kernel-2-fixed_point}
\end{equation}
If $f(0)\neq 0$, (\ref{Kernel-2-fixed_point}) will determine a nontrivial ($\varphi\not\equiv 0$) equilibrium
extremal point.
Notice that $f(\xi)$ (or $g(\xi)=-f^{-1}(\xi)$ or $G = \int d\xi g(\xi)$)
is a given function that specifies a Casimir leaf on which we seek an equilibrium point (see Remark \ref{remark:J_operator-2}).
Here we prove the following existence theorem:

\begin{thm}
\label{theorem:fixed-point}
There exists a finite positive number $M$, determined only by the geometry of $\Omega$,
such that if
\begin{equation}
L:= \inf_{\eta\in\R} \frac{|f(\eta)|}{|\eta|} < M,
\label{fixed-point_condition}
\end{equation}
then (\ref{Kernel-2-fixed_point}) has a solution $\varphi\in H^1_0(\Omega)$.
\end{thm}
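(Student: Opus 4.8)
The plan is to solve (\ref{Kernel-2-fixed_point}) by the direct method of the calculus of variations, exploiting a convexity that is built into the construction of $f$, and to take for $M$ a Poincar\'e-type constant of $\Omega$. Recall that $g=\tilde{\partial}_\xi G$ was taken monotone ($G$ convex), so $f=(-g)^{-1}$ is continuous and \emph{monotone nonincreasing}, whence its primitive $F(\eta):=\int_0^\eta f(s)\,ds$ is \emph{concave}. I would therefore work with the action functional
\[
E(\varphi)=\frac12\int_\Omega |\nabla\varphi|^2\,dx-\int_\Omega F(\varphi)\,dx,\qquad \varphi\in H^1_0(\Omega),
\]
whose weak Euler--Lagrange equation is exactly $-\Delta\varphi=f(\varphi)$, i.e.\ (\ref{Kernel-2-fixed_point}). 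Since $F$ is concave, $-\int_\Omega F(\varphi)\,dx$ is convex, so $E$ is a strictly convex, continuous functional on the reflexive Hilbert space $H^1_0(\Omega)$; it is therefore weakly lower semicontinuous, and its critical points coincide with its (unique) minimizer. That $E$ is finite and $C^1$ on $H^1_0(\Omega)$ uses the two-dimensional embeddings $H^1_0(\Omega)\hookrightarrow L^p(\Omega)$ together with the continuity and at-most-polynomial growth of $f$; if $f$ is finite only on a proper subinterval of $\R$ one first truncates and removes the truncation afterwards via an a priori $L^\infty$-bound.

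Everything then reduces to \emph{coercivity} of $E$, and this is where the geometry enters: one may take $M=\lambda_1(\Omega)$, the first Dirichlet eigenvalue of $-\Delta$, equivalently the optimal constant in $\lambda_1\|\varphi\|^2\le\|\nabla\varphi\|^2$, which depends on $\Omega$ alone. Condition (\ref{fixed-point_condition}), $L=\inf_{\eta}|f(\eta)|/|\eta|<M$, controls the effective linear rate of $f$ and hence gives an at-most-quadratic upper bound $F(\eta)\le\tfrac12 L'\eta^2+C_1|\eta|+C_2$ with some $L'<\lambda_1$ (for a monotone $f$ this is immediate, e.g.\ from the tangent-line estimate $F(\eta)\le F(0)+f(0)\eta$); inserting it into $E$ and using the Poincar\'e inequality yields
\[
E(\varphi)\ \ge\ \frac12\Bigl(1-\frac{L'}{\lambda_1}\Bigr)\|\nabla\varphi\|^2-\frac{C_1}{\sqrt{\lambda_1}}\,\|\nabla\varphi\|-C_2,
\]
which $\to+\infty$ as $\|\nabla\varphi\|\to\infty$. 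Coercivity and weak lower semicontinuity then produce, by the direct method, a minimizer $\varphi_*\in H^1_0(\Omega)$ solving (\ref{Kernel-2-fixed_point}), unique by strict convexity. A standard elliptic bootstrap ($\mathcal{K}\colon L^p\to W^{2,p}$ with $f$ continuous) places $\varphi_*$ in $H^2(\Omega)\cap H^1_0(\Omega)\subset C(\overline\Omega)$.

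The hard part is precisely this coercivity/finiteness step --- promoting the single-scale information in (\ref{fixed-point_condition}) to a global quadratic control of $F$ with a margin below $\lambda_1$; the monotone structure $f=(-g)^{-1}$ is what makes this routine, and it is also what keeps the theorem true: for a general (say strictly increasing) $f$ the statement would fail, e.g.\ $f(\eta)=\lambda_1\eta+c$ with $c\ne0$ has $L=0<M$ yet, by the Fredholm alternative for $-\Delta-\lambda_1$, admits no $H^1_0$ solution. An alternative route, closer to the operator-theoretic spirit of the paper, rewrites (\ref{Kernel-2-fixed_point}) as the fixed-point equation $\varphi=\mathcal{K}(f(\varphi))=:T\varphi$ with $\mathcal{K}$ the inverse Dirichlet Laplacian of Sec.~\ref{subsec:vorticity_equation}: $T$ is compact on $L^2(\Omega)$ (continuity of the superposition operator $\varphi\mapsto f(\varphi)$ and compactness of $\mathcal{K}\colon L^2\to H^2\cap H^1_0$), an a priori bound on the set $\{\varphi=\sigma T\varphi:\ \sigma\in[0,1]\}$ follows by testing $-\Delta\varphi=\sigma f(\varphi)$ against $\varphi$ and against the first eigenfunction and again using $L<M=\lambda_1$, and Schaefer's theorem supplies a fixed point. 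The analytic crux --- the interplay of $\lambda_1$ with the growth and monotonicity of $f$ --- is the same in both approaches.
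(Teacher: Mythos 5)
Both of your routes founder on the same point: the hypothesis (\ref{fixed-point_condition}) is a \emph{single-scale} condition --- it only guarantees one value $\eta^*$ at which $|f(\eta^*)|\le (L+\delta)|\eta^*|$ --- whereas the coercivity of your functional $E$ and the a priori bound on the Schaefer family $\{\varphi=\sigma T\varphi\}$ both require \emph{global} growth control of $f$ at infinity, of the type $|f(\eta)|\le L'|\eta|+C$ for all $\eta$ with $L'<\lambda_1$. That promotion is exactly the step you call ``routine,'' and it is not available from (\ref{fixed-point_condition}): $f(\eta)=\eta^3$ has $L=0<M$ but $F(\eta)=\eta^4/4$ destroys coercivity, and your own example $f(\eta)=\lambda_1\eta+c$ has $L=0$ yet no solution. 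You escape only by assuming $f$ nonincreasing, so that $E$ is convex and the tangent-line bound $F(\eta)\le f(0)\eta$ gives coercivity --- but then existence holds with \emph{no} reference to $L<M$ at all, so the hypothesis does no work, and the case the theorem is actually aimed at (monotone $f$ that may increase, with the no-solution examples of Sec.~\ref{subsec:no-solution} nearby) is left untouched. Note also that your reading of $L$ as a genuinely one-sided infimum is what forces you to conclude the theorem fails for increasing $f$; the quantity the paper's argument actually uses is the symmetric one, $\inf_{\eta>0}\max(|f(\eta)|,|f(-\eta)|)/\eta$, which equals $\lambda_1$ (not $0$) for $f(\eta)=\lambda_1\eta+c$.

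The idea missing from your proposal, and the heart of the paper's proof, is an $L^2\to L^\infty$ estimate that makes the behavior of $f$ at infinity irrelevant. Chaining Poincar\'e, the curl estimate, and the Sobolev embedding $H^2(\Omega)\hookrightarrow C(\overline{\Omega})$ gives $\sup_\Omega|\mathcal{K}\omega|\le c_sc_p\|\omega\|_{L^2}$; hence on the ball $W_\Lambda=\{\|\omega\|\le\Lambda\}$ with $\Lambda=|\eta^*|/(c_sc_p)$, the map $\mathcal{F}(\omega)=f(\mathcal{K}\omega)$ only ever samples $f$ on $[-\eta^*,\eta^*]$, where monotonicity bounds $|f|$ by its endpoint values $\le(L+\delta)|\eta^*|$. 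Then $\|f(\mathcal{K}\omega)\|\le|\Omega|^{1/2}(L+\delta)c_sc_p\Lambda\le\Lambda$ precisely when $L+\delta\le M:=1/(|\Omega|^{1/2}c_sc_p)$, so $W_\Lambda$ is invariant and Schauder's theorem (compactness of $\mathcal{K}$, continuity of $f$) yields a fixed point. The $M$ obtained this way is an embedding constant times $|\Omega|^{-1/2}$, not $\lambda_1$; since the theorem only asserts existence of \emph{some} $M$ this is secondary, but your identification $M=\lambda_1$ is inseparable from the flawed coercivity step. If you want to keep a variational or Leray--Schauder framework, you must first truncate $f$ outside $[-\eta^*,\eta^*]$ and then verify a posteriori that the solution satisfies $\|\mathcal{K}\omega\|_\infty\le|\eta^*|$ --- which is the paper's invariant-ball argument in disguise.
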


\begin{proof}
We show the existence of the solution by Schauder's fixed-point theorem
(see for example \cite[p. 20]{Nirenberg2001} and \cite[p. 43]{bifurcation}).
First we rewrite (\ref{Kernel-2-fixed_point}) as
\begin{equation}
\omega = \mathcal{F}(\omega) := f(\mathcal{K}\omega) ,
\label{kernel-1'}
\end{equation}
where $\mathcal{K}$ is a compact operator on $L^2(\Omega)$ (cf.\  (\ref{operator-K})).
Since $f\in C(\R)$, $\mathcal{F}$ is a continuous
compact map on $L^2(\Omega)$.
We consider a closed convex subset
\[
W_\Lambda = \{ \omega \sep \| \omega \| \leq \Lambda \} ~ \subset L^2(\Omega),
\]
where $\| \omega \|$ is the norm of $L^2(\Omega)$ and the parameter $\Lambda$ will be determined later.  We
show that the compact map $\mathcal{F}$ has a fixed point in $W_\Lambda$.
By Poincar\'e's inequality, we have, for $\varphi \in H^1_0(\Omega)$,
\[
\|\varphi\| \leq c_1 \| \nabla \varphi \|,
\]
where $c_1$ is a positive number that is determined by the geometry of $\Omega$.
For $\bm{u}=\nabla\varphi\times\uep\in L^2_\sigma(\Omega)\cap H^1(\Omega)$,
we have
\[
\|\bm{u}\|  \leq c_2 \|\nabla\times\bm{u}\|,
\]
where $c_2$ is also a positive number determined by the geometry of $\Omega$.
Here $\|\bm{u}\| = \|\nabla\varphi\|$ and $\|\nabla\times\bm{u}\|= \|\Delta\varphi\|$.
Hence, denoting $c_p= c_2 c_1+c_2 +1$, we have
\[
\|\varphi\|_{H^2} := \|\varphi\| + \|\nabla\varphi\| + \|\Delta\varphi\|
\leq c_p \|\Delta\varphi\|.
\]
By Sobolev's inequality, we have
\[
\sup_\Omega |\varphi| \leq c_s \|\varphi\|_{H^2},
\]
where $c_s$ is again a positive number determined by the geometry of $\Omega$.
Summarizing these estimates, we have, for $\omega\in W_\Lambda$,
\[
\sup_\Omega |\mathcal{K}\omega| \leq c_s c_p \|\omega\| \leq c_s c_p \Lambda .
\]
For an arbitrary positive number $\delta$, there is a finite number $\eta^*$ such that
$|f(\pm\eta^*)|\leq  (L+\delta)|\eta^*|$.
If we choose $\Lambda=|\eta^*|/(c_s c_p)$, then by the monotonicity of $f(\eta)$, we find
\begin{equation}
\sup_\Omega |f(\mathcal{K}\omega)| \leq (L+\delta)|\eta^*|= (L+\delta) c_s c_p \Lambda,
\label{estimate-1}
\end{equation}
and, upon denoting $|\Omega|:=\int_\Omega \!dx$, we obtain
\[
\| f(\mathcal{K}\omega) \| \leq |\Omega|^{1/2} (L+\delta) c_s c_p \Lambda=  (L+\delta)  \Lambda/ M\,,
\]
where
\begin{equation}
M:= \frac{1}{|\Omega|^{1/2} c_s c_p}\,.
\label{estimate-M}
\end{equation}
If $(L+\delta)\leq M$, we estimate $\| f(\mathcal{K}\omega) \| \leq \Lambda$, and thus
$\mathcal{F}$ maps $W_\Lambda$ into $W_\Lambda$.  Therefore,
$\mathcal{F}$ has a fixed-point in $W_\Lambda$.
If $W_\Lambda$ contains a fixed point, then an arbitrary $W_{\Lambda'}$ with $\Lambda'>\Lambda$ contains
a fixed-point.  Since $\delta$ is arbitrary, $L<M $ is the sufficient condition
for the existence of a fixed point.
\end{proof}

\begin{cor}
\label{corollary:fixed-point}
If $L=|f(\eta^*)|/|\eta^*| >0$ for some finite $\eta^*$, then
we may assume $\delta=0$ in (\ref{estimate-1}), and the solvability condition (\ref{fixed-point_condition}) can be
modified to be $L\leq M$.
\end{cor}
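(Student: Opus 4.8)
The plan is to re-run the proof of Theorem~\ref{theorem:fixed-point} essentially verbatim, observing that the present hypothesis lets us dispense with the auxiliary parameter $\delta$. In that proof the number $\delta>0$ enters in exactly one place: from the mere definition of the infimum $L=\inf_{\eta}|f(\eta)|/|\eta|$ one can only guarantee, for each $\delta>0$, a \emph{finite} value $\eta^{*}$ with $|f(\pm\eta^{*})|\le(L+\delta)|\eta^{*}|$, and one then sets $\Lambda=|\eta^{*}|/(c_{s}c_{p})$. Since $\delta$ is arbitrary but strictly positive, the sufficient condition that comes out of the argument is the strict inequality $L<M$.

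Under the hypothesis of the corollary the infimum is \emph{attained}: there is a finite $\eta^{*}\neq 0$ with $|f(\eta^{*})|/|\eta^{*}|=L>0$. Hence $\Lambda:=|\eta^{*}|/(c_{s}c_{p})$ is a well-defined positive number and we may take $\delta=0$ from the outset. I would then record, exactly as in the theorem, that $\sup_{\Omega}|\mathcal{K}\omega|\le c_{s}c_{p}\|\omega\|\le c_{s}c_{p}\Lambda=|\eta^{*}|$ for every $\omega\in W_{\Lambda}$, so that by monotonicity of $f$ the estimate (\ref{estimate-1}) holds with $\delta=0$, namely $\sup_{\Omega}|f(\mathcal{K}\omega)|\le L|\eta^{*}|=Lc_{s}c_{p}\Lambda$; multiplying by $|\Omega|^{1/2}$ and using the definition (\ref{estimate-M}) of $M$ gives $\|f(\mathcal{K}\omega)\|\le L\Lambda/M$. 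Consequently, if $L\le M$ then $\|\mathcal{F}(\omega)\|=\|f(\mathcal{K}\omega)\|\le\Lambda$, so the compact continuous map $\mathcal{F}$ of Theorem~\ref{theorem:fixed-point} sends the closed convex set $W_{\Lambda}\subset L^{2}(\Omega)$ into itself, and Schauder's fixed-point theorem produces a fixed point $\omega\in W_{\Lambda}$, equivalently a solution $\varphi=\mathcal{K}\omega\in H^{1}_{0}(\Omega)$ of (\ref{Kernel-2-fixed_point}).

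The one step requiring care is the passage ``by the monotonicity of $f$'': with $\delta=0$ there is no slack, so one must be sure that the full range $[-|\eta^{*}|,|\eta^{*}|]$ of $\mathcal{K}\omega$ (not just the endpoint $\eta^{*}$) is controlled by $L|\eta^{*}|$, which follows from monotonicity of $f$ once the infimum is known to be attained, after choosing the sign of $\eta^{*}$ appropriately. This is where the qualitative difference with Theorem~\ref{theorem:fixed-point} lives; the rest of the argument is identical, so in writing it up I would merely refer back to the proof of the theorem rather than reproduce the Poincaré/Sobolev chain of estimates.
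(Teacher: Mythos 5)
Your proposal is correct and is essentially the paper's own (implicit) argument: the corollary is stated without a separate proof precisely because, once the infimum $L$ is attained at a finite $\eta^*$, one reruns the proof of Theorem~\ref{theorem:fixed-point} with $\delta=0$ and $\Lambda=|\eta^*|/(c_sc_p)$, so that $\mathcal{F}$ maps $W_\Lambda$ into itself already under $L\leq M$. The only point of care you raise — controlling $|f|$ on all of $[-|\eta^*|,|\eta^*|]$ and not just at $\eta^*$ — is the same issue the theorem's proof handles by asserting the bound $|f(\pm\eta^*)|\leq(L+\delta)|\eta^*|$ at \emph{both} endpoints, so your treatment is no less (and no more) complete than the paper's.
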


The bound $M$ defined by (\ref{estimate-M}) is
related to the eigenvalue of $-\Delta=\mathcal{K}^{-1}$.
On the other hand, the constant $L$ defined by (\ref{fixed-point_condition}) is a property of the Casimir element. As noted above, because of the homogeneity of (\ref{Casmir-1}), there is at least a one-parameter family of Casimir elements and the choice of a multiplicative constant reciprocally scales $L$.  This provides an arbitrary  parameter multiplying the Casimir element in the
determining equation (\ref{fixed_ponit}), and such a parameter may be considered to be an ``eigenvalue'' characterizing the stationary point.  The no-solution example of the next section will reveal the ``nonlinear property''
of this eigenvalue problem.


\subsection{No-Solution Example}
\label{subsec:no-solution}

Here we present a class of examples that violate the
solvability condition proposed in Theorem\,\ref{theorem:fixed-point}.  This demonstrates that finding Casimir elements that generalize the extremal equation (\ref{Kernel-2-fixed_point}), does not automatically extend the set of extremal equilibria, since the resulting elliptic equation may not have a solution.  This is demonstrated by the following two propositions.

\begin{prop}
\label{proposition:nosolution}
Let $f(\varphi)$ have the form
\begin{equation}
f(\varphi) = \lambda_1 \varphi + \vartheta (\varphi),
\label{example-1}
\end{equation}
where $\lambda_1$ is the first eigenvalue of $-\Delta=\mathcal{K}^{-1}$ and $\vartheta >0$.
Then, (\ref{Kernel-2-fixed_point}) does not have a solution.
\end{prop}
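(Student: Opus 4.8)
The plan is to exploit the resonance concealed in the hypothesis: the affine part of $f$ has slope exactly $\lambda_1$, so that testing the equation $-\Delta\varphi=f(\varphi)$ against the first Dirichlet eigenfunction annihilates both the $-\Delta\varphi$ term and the $\lambda_1\varphi$ term and isolates the contribution of $\vartheta$, which has a fixed sign. First I would recall the relevant spectral facts for $-\Delta=\mathcal{K}^{-1}$ with homogeneous Dirichlet data on the bounded, smooth, simply connected domain $\Omega\subset\R^2$: the least eigenvalue $\lambda_1>0$ possesses an eigenfunction $\phi_1\in H^1_0(\Omega)\cap C(\overline{\Omega})$ with $\phi_1>0$ throughout $\Omega$ (standard; e.g.\ the Krein--Rutman theorem applied to the positive compact operator $\mathcal{K}$). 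In weak form, $\int_\Omega\nabla\phi_1\cdot\nabla\zeta\,dx=\lambda_1\int_\Omega\phi_1\,\zeta\,dx$ for every $\zeta\in H^1_0(\Omega)$.

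Next, argue by contradiction: assume $\varphi\in H^1_0(\Omega)$ solves (\ref{Kernel-2-fixed_point}) with $f$ of the form (\ref{example-1}). In the framework of Sec.~\ref{subsec:fixed-points}, $\omega=f(\varphi)\in L^2(\Omega)$, so $\varphi=\mathcal{K}\omega\in H^2(\Omega)\cap H^1_0(\Omega)$ and, since $n=2$, $\varphi\in C(\overline{\Omega})$; hence $\vartheta(\varphi)$ is a bounded, continuous, strictly positive function on $\Omega$, and all the integrals below converge absolutely. Now pair (\ref{Kernel-2-fixed_point}) with $\phi_1$ and, separately, use the eigenfunction identity with $\zeta=\varphi$:
\[
\int_\Omega\nabla\varphi\cdot\nabla\phi_1\,dx=\lambda_1\int_\Omega\varphi\,\phi_1\,dx+\int_\Omega\vartheta(\varphi)\,\phi_1\,dx \quad\text{and}\quad \int_\Omega\nabla\varphi\cdot\nabla\phi_1\,dx=\lambda_1\int_\Omega\varphi\,\phi_1\,dx.
\]
Subtracting yields $\int_\Omega\vartheta(\varphi)\,\phi_1\,dx=0$, which is impossible since the integrand is strictly positive on a set of full measure. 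Therefore (\ref{Kernel-2-fixed_point}) admits no solution.

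The structural heart of the argument --- the resonance together with the positivity of the ground state $\phi_1$ and the fixed sign of $\vartheta$ --- is elementary; I expect the only delicate point to be the regularity/integrability bookkeeping that legitimizes using $\phi_1$ as a test function, i.e.\ confirming that $\vartheta(\varphi)\,\phi_1\in L^1(\Omega)$ so that the formal integrations by parts are rigorous. Within the $L^2$-framework of Sec.~\ref{subsec:fixed-points} this is immediate, but it is worth stating the admissible class of $\vartheta$ explicitly. The same computation also makes clear that the result is sharp: were the slope of the linear part different from $\lambda_1$, or were $\vartheta$ to change sign, the obstruction would disappear --- in accord with the existence statement of Theorem~\ref{theorem:fixed-point}.
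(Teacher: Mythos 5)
Your argument is correct and is essentially the paper's own proof: both test the equation against the positive first eigenfunction $\phi_1$, use the resonance at $\lambda_1$ to cancel the linear terms, and derive the contradiction $\int_\Omega \vartheta(\varphi)\,\phi_1\,dx = 0$ from the strict positivity of $\vartheta$ and $\phi_1$. The only difference is cosmetic --- you phrase the cancellation via the weak formulation rather than the self-adjointness identity $(-\Delta\varphi,\phi_1)=(\varphi,-\Delta\phi_1)$ --- and your added regularity bookkeeping is a welcome but inessential refinement.
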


\begin{proof}
We denote by $\phi_1$ the eigenfunction corresponding to the eigenvalue $\lambda_1$.
Upon taking the inner product of (\ref{Kernel-2-fixed_point}) with $\phi_1$, we obtain
\begin{equation}
(-\Delta \varphi, \phi_1) = ( \lambda_1 \varphi, \phi_1) + (\vartheta (\varphi), \phi_1).
\label{ipdel}
\end{equation}
The left-hand-side of (\ref{ipdel}) satisfies
$(-\Delta \varphi, \phi_1) = (\varphi, -\Delta\phi_1) = ( \lambda_1
\varphi, \phi_1)$, which cancels the first term on the
right-hand-side, leaving $(\vartheta (\varphi), \phi_1) =0$.
However, this is a contradiction, because $\phi_1 > 0$ (or $< 0$ if
otherwise normalized) on $\Omega$
\footnote{
Let $\Omega \subset \mathbb{R}^n$ be a bounded domain and $\lambda_k$ the eigenvalues of
$-\Delta$ with zero Dirichlet boundary condition.  It is  well-known that the eigenvalues have the order $0 < \lambda_1 < \lambda_2 \le \ldots \le
\lambda_k \le \ldots$,  with $\lambda _k \rightarrow +\infty$ as $k
\rightarrow +\infty$, and $\lambda_1$ is the unique eigenvalue with corresponding eigenfunction that does not change sign on $\Omega$, i.e., it is strictly positive or strictly negative
 on the whole of $\Omega$. Furthermore, the dimension of the eigenspace associated with $\lambda_1$ is one.}
and, by the assumption, $\vartheta (\varphi)>0$.
\end{proof}

The above result is easily generalized as follows:

\begin{prop}
\label{proposition:nosolution2}
Suppose that
\begin{equation}
f(\varphi) = \lambda_j \varphi + \vartheta (\varphi),
\label{example-2}
\end{equation}
where $\lambda_j$ is any eigenvalue of $-\Delta$ and
$0< A < \vartheta (\varphi) < B$, for some $A$ and $B$.
Let $\phi_j$ be the eigenfunction corresponding to $\lambda_j$.  Since
$\phi_j$ may not have a definite sign in $\Omega$,
we divide $\Omega$ into $\Omega_+ = \{ x\in\Omega\sep \phi_j (x)>0\}$ and
$\Omega_- = \{ x\in\Omega\sep\phi_j (x)<0\}$, and define
\[
P_+ = \int_{\Omega_+}\!\!dx\,   \phi_j \,,
\quad
P_- = -\int_{\Omega_-}\! \!dx\,  \phi_j \quad ({\rm both}\ \geq 0)
\]
If $P_+\neq P_-$, (\ref{Kernel-2-fixed_point}) does not have a solution for some ranges of
$A$ and $B$.
\end{prop}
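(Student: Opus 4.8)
The plan is to run the inner-product argument of Proposition~\ref{proposition:nosolution} again, but instead of deriving an outright contradiction I would extract a quantitative compatibility condition on $A$ and $B$. Suppose $\varphi\in H^1_0(\Omega)$ solves (\ref{Kernel-2-fixed_point}) with $f$ given by (\ref{example-2}). Pairing this equation with the eigenfunction $\phi_j$ in $L^2(\Omega)$ and using that $-\Delta$ with the zero Dirichlet condition is self-adjoint on $H^1_0(\Omega)$, one has $(-\Delta\varphi,\phi_j)=(\varphi,-\Delta\phi_j)=\lambda_j(\varphi,\phi_j)$, which exactly cancels the term $(\lambda_j\varphi,\phi_j)$ coming from the right-hand side. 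This leaves the integral identity
\[
\int_\Omega \vartheta(\varphi)\,\phi_j\,dx = 0 .
\]

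Next I would split this integral over $\Omega_+=\{\phi_j>0\}$ and $\Omega_-=\{\phi_j<0\}$; the nodal set $\{\phi_j=0\}$ is Lebesgue-null and contributes nothing. Inserting the pointwise bounds $A<\vartheta(\varphi)<B$ and recalling the definitions of $P_+$ and $P_-$, one gets $A P_+ < \int_{\Omega_+}\vartheta(\varphi)\phi_j\,dx < B P_+$ and $-B P_- < \int_{\Omega_-}\vartheta(\varphi)\phi_j\,dx < -A P_-$. Adding these and using the identity above forces $A P_+ < B P_-$ and $A P_- < B P_+$ simultaneously, i.e.\
\[
\frac{A}{B} < \min\!\left(\frac{P_+}{P_-},\,\frac{P_-}{P_+}\right)
\]
whenever both $P_\pm>0$ (and in the degenerate case $P_-=0$, corresponding to $j=1$, one gets $AP_+<0$, impossible, recovering Proposition~\ref{proposition:nosolution}). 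Since $P_+\neq P_-$ makes the right-hand side strictly less than $1$, the set of pairs $(A,B)$ with $0<A<B$ and $A/B$ in the nonempty interval $[\,\min(P_+/P_-,P_-/P_+),\,1)$ violates this necessary condition; for any such $A,B$ equation (\ref{Kernel-2-fixed_point}) has no solution. This is the precise meaning of ``for some ranges of $A$ and $B$'': it fails once the relative oscillation of $\vartheta$ is small enough, $A/B$ sufficiently close to $1$.

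I do not expect a genuine analytic obstacle here: the heart of the argument is the one-line cancellation of the $\lambda_j\varphi$ term followed by an elementary sign-decomposition estimate. The only places needing a little care are (i) justifying that the nodal set of a Dirichlet eigenfunction has measure zero (standard interior regularity/unique continuation on a domain with sufficiently smooth boundary), (ii) spelling out the degenerate case $P_-=0$ and checking consistency with the earlier proposition, and (iii) noting that any $H^1_0(\Omega)$ solution suffices for the weak pairing used, which is automatic since $f\in C(\R)$ (and one could invoke elliptic regularity to say more if desired). If one prefers a cleaner statement, the conclusion can be phrased as: (\ref{Kernel-2-fixed_point}) is solvable only if $B/A \ge \max(P_+/P_-,P_-/P_+)$, a ratio that blows up as $\phi_j$ becomes more lopsided.
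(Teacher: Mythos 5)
Your proposal is correct and follows essentially the same route as the paper: pair with $\phi_j$, cancel the $\lambda_j\varphi$ term by self-adjointness, split the resulting identity $(\vartheta(\varphi),\phi_j)=0$ over $\Omega_\pm$, and derive the necessary inequalities $AP_+<BP_-$ and $AP_-<BP_+$, which fail when $A/B$ is close enough to $1$ and $P_+\neq P_-$. Your packaging of the two cases into the single condition $A/B<\min(P_+/P_-,P_-/P_+)$ and your remarks on the nodal set and the degenerate case $P_-=0$ are fine refinements but do not change the argument.
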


\begin{proof}
As shown in the proof of Proposition\,\ref{proposition:nosolution2},
a solution of (\ref{Kernel-2-fixed_point}) must satisfy
\[
(\vartheta (\varphi), \phi_j) = \int_{\Omega_+}\!\!dx\,   \vartheta (\varphi) \phi_j
+ \int_{\Omega_-}\!\!dx\,   \vartheta (\varphi) \phi_j =0\,.
\]
By  assumption, we have
\begin{eqnarray*}
A \phi_j < \vartheta (\varphi) \phi_j < B \phi_j
\quad &\mathrm{in}&\Omega_+,
\\
A \phi_j > \vartheta (\varphi) \phi_j > B \phi_j
\quad &\mathrm{in}&\Omega_-.
\end{eqnarray*}
Hence, the inequalities
\[
A P_+ - B P_- < 0 < B P_+ - A P_-
\]
must hold, i.e.,
\begin{eqnarray}
A/B < P_-/P_+,
\label{contradiction1} \\
B/A > P_-/P_+ .
\label{contradiction2}
\end{eqnarray}
However, if $P_-/P_+ <1$, and $A$ and $B$ are such that $1> A/B > P_-/P_+$, then there is a contradiction with (\ref{contradiction1}).  Similarly, if
$P_-/P_+ >1$, and $A$ and $B$ are such that $1< B/A < P_-/P_+$, then there is a contradiction with  (\ref{contradiction2}).
\end{proof}


\section{Concluding Remarks}
\label{sec:conclusion}

After establishing some mathematical facts about the Hamiltonian form of the Euler equation of two-dimensional incompressible inviscid flow, we studied the center of the  Poisson algebra, i.e.,
the kernel of the noncanonical Poisson operator $\mathcal{J}(\omega)$.  Casimir elements $C(\omega)$ were obtained by ``integrating'' the ``differential equation''
\[
\mathcal{J}(\omega) \partial_\omega C(\omega) =0\,.
\]
For finite-dimensional systems with phase space coordinate $z$, this amounts to
an analysis of  $\mathcal{P}:=\mathcal{J}(z)\partial_z$, a linear partial differential operator, and
nontriviality can arise from a \emph{singularity} of $\mathcal{P}$, whence an inherent structure emerges.
Recall the simple example given in Sec.~\ref{sec:introduction} where
$\mathcal{P}=ix \partial_x$ was seen to generate the \emph{hyperfunction} Casimir $C(x)=Y(x)$.
For finite-dimensional systems  the theory naturally finds its way to
algebraic analysis: in the language of D-module theory, Casimir elements constitute
$\mathrm{Ker}(\mathcal{P}) = \mathrm{Hom}_{\mathcal{D}}(\mathrm{Coker}(\mathcal{P}),F)$,
where $\mathcal{D}$ is the ring of partial differential operators and $F$ is the
function space on which $\mathcal{P}$ operates, and
$\mathrm{Coker}(\mathcal{P})=\mathcal{D}/\mathcal{D}\mathcal{P}$ is the D-module corresponding
to the equation $\mathcal{P} C(\omega)=0$.
However, in the present study $\omega$ is a member of an infinite-dimensional Hilbert space, thus
$\mathcal{P}$ may be regarded as an infinite-dimensional
generalization of linear partial differential operators.
From the singularity of such an infinite-dimensional (or \emph{functional}) differential operator
$\mathcal{P}=\mathcal{J}(\omega) \partial_\omega$, we unearthed \emph{singular} Casimir elements, and to justify the operation of $\mathcal{P}$ on singular elements, we invoked a generalized functional
derivative (Clarke differential or sub-differential) that we denoted by $\tilde{\partial}_\omega$.

For infinite-dimensional systems, we cannot ``count'' the dimensions of
$\mathrm{Ker}(\mathcal{P})$ and $\mathrm{Ker}(\mathcal{J})$.
It is, however, evident that
$\mathrm{dim-Ker}(\mathcal{P}) < \mathrm{dim-Ker}(\mathcal{J})$, if $\mathcal{J}$ has singularities, i.e.,
singularities create ``nonintegrable'' elements of $\mathrm{Ker}(\mathcal{J})$.
As shown in Theorem\,\ref{theorem:kernel2}, a \emph{plateau}
in $\omega$ causes a singularity of $\mathcal{J}(\omega)$ and generates new
elements of $\mathrm{Ker}(\mathcal{J}(\omega))$, which can be integrated
to produce singular Casimir elements (Corollary\,\ref{corollary:Casimir}).
However, as noted in Remark \ref{remark:more_general_solutions}(i),
more general elements of $\mathrm{Ker}(\mathcal{J}(\omega))$ that are \emph{not integrable} may stem from a
 plateau singularity.
Moreover, we had to assume Lipschitz continuity for $\omega$ to obtain an
explicit relation between $\psi\in\mathrm{Ker}(\mathcal{J}(\omega))$ and $\omega$
-- otherwise we could not \emph{integrate} $\psi$ with respect to $\omega$ to construct a
Casimir element.
In the general definition of $\mathcal{J}(\omega)$, however,
$\omega$ may be nondifferentiable (we assumed only continuity),
and then, a general $\psi\in\mathrm{Ker}(\mathcal{J}(\omega))$ may not
have an integrable relation to $\omega$ (see Lemma \ref{lemma:kernel}).

In Sec.~\ref{sec:equilibrium} we solved the equation
\[
\tilde{\partial}_\omega [ H(\omega) + C(\omega) ] \ni 0
\]
for $\omega$, where the solution $\omega$ gave an equilibrium point of the dynamics
induced by a Hamiltonian $H(\omega)$.
A singular (kinked) Casimir yielded a multivalued (set-valued) gradient
$\tilde{\partial}_\omega C(\omega)$,
encompassing an infinite-dimensional solution stemming from the plateau singularity.  This arose because
in the plateau of $\omega$, $\psi\in\mathrm{Ker}(\mathcal{J}(\omega))$ is freed from
$\omega$ and may distribute arbitrarily.
The component $g_L(\omega)$ of the Casimir $C(\omega)$ of
(\ref{g-form}) represents explicitly the
regular ``dimensions" of $\mathrm{Ker}(\mathcal{J}(\omega))$.
In contrast, the undetermined dimensions pertinent to the singularity $\omega=\omega_0$
are ``implicitly'' included in the step-function component of (\ref{g-form}),
or in the kink of $C(\omega)$.
However, for a given Hamiltonian $H(\omega)$, i.e.\ a given dynamics, a specific
relation between $\varphi\in\tilde{\partial}_\omega C(\omega)$ and $\omega$
emerges.

 Theorem \ref{theorem:fixed-point} of Sec.~\ref{subsec:fixed-points} and the nonexistence examples of Sec.~\ref{subsec:no-solution} may not be new results in the theory of elliptic partial differential equations, but they do help delineate
the relationship between Hamiltonians and Casimir elements, viz.\  that Casimir elements alone do not determine the extent of the set of equilibrium points.  In the present paper, we did not discuss the bifurcation of the equilibrium points; the reader is referred to \cite{bifurcation} for a presentation of the actual state of the studies of semilinear
elliptic problems and several techniques in nonlinear analysis (see
also \cite{Nirenberg2001} and \cite{PL Lions1982}).
We also note that we have excluded nonmonotonic $g(\omega)$ that will make $f(\varphi)$ multivalued
or, more generally, equations like $\Phi(\Delta\varphi,\varphi)=0$;
cf.\ (\ref{Kernel-general''-solution}).
For fully nonlinear elliptic partial differential equations,
the reader is referred to \cite{Nirenberg2001,L Nirenberg1981,LC Evans,Caffarelli-Cabre,Caffarelli-Salazar}.

\section*{acknowledgments}
The authors acknowledge informative discussions with Yoshikazu Giga and are grateful for his suggestions.
ZY was supported by Grant-in-Aid for Scientific Research (23224014) from MEXT, Japan.
PJM was supported by U.S.~Dept.\ of Energy Contract \# DE-FG05-80ET-53088.

\section*{Appendix A. On the Poisson operator in terms of the velocity field $\bm{u}\in L^2_\sigma(\Omega)$}
\label{appendix}

Here we formulate the Euler equation, for both $n=2$ and 3, as an
evolution equation in $L_\sigma^2(\Omega)$ (see Sec.\,\ref{subsec:vorticity_equation}),
and discuss the Poisson operator $\mathcal{J}({\bm{u}})$ in this space.  This differs from the formulation of Sec.~\ref{sec:Euler_equation} in that the state variable here is the velocity field $\bm{u}\in L_\sigma^2(\Omega)$ instead
of the vorticity $\omega$.  As noted in Sec.\,\ref{subsec:vorticity_equation}, $L_\sigma^2(\Omega)$
is a closed subspace of $L^2(\Omega)$, and we have the orthogonal decomposition
(\ref{decomposition_of_L2}).
We denote by $P_\sigma$ the orthogonal projection onto $L_\sigma^2(\Omega)$.
Upon applying $P_\sigma$ to the both sides of (\ref{Euler-2}), we obtain
\begin{equation}
\partial_t \bm{u} =- P_\sigma (\nabla\times\bm{u})\times\bm{u},
\label{Euler-3}
\end{equation}
which is interpreted as the evolution equation in $L^2_\sigma(\Omega)$, where
the incompressibility condition (\ref{incompressibility}) and the boundary condition (\ref{BC})
are implied by $\bm{u}\in L^2_\sigma(\Omega)$.

For $\bm{u}\in L^2_\sigma(\Omega)$, the Hamiltonian is, of course, the kinetic energy
\begin{equation}
H(\bm{u}) = \frac{1}{2} \| \bm{u} \|^2 \,.
\label{Hamiltonian_by_u}
\end{equation}
Fixing a sufficiently smooth $\bm{u}$ acting as a parameter, viz.\  $\nabla\times\bm{u}\in C(\Omega)$,
we define for $\bm{v}\in L^2_\sigma(\Omega)$ the following
the noncanonical Poisson operator:
\begin{equation}
\mathcal{J}({\bm{u}}) \bm{v} = - P_\sigma (\nabla\times\bm{u})\times\bm{v}\, .
\label{J_by_u}
\end{equation}
As a linear operator (recall Remark \ref{remark:J_operator} of Sec.~\ref{sec:introduction}),
$\mathcal{J}(\bm{u})$ consists of the vector multiplication by $(\nabla\times\bm{u})$ followed by
projection with $P_\sigma$.
Evidently, $\mathcal{J}({\bm{u}})$ is antisymmetric:
\[
(\mathcal{J}({\bm{u}}) \bm{v}, \bm{v}')
= -(\bm{v}, \mathcal{J}({\bm{u}}) \bm{v}')
\quad \forall \bm{v}, \bm{v}'\in L_\sigma^2(\Omega).
\]
In fact, for every fixed smooth $\bm{u}$, $i\mathcal{J}(\bm{u})$ is
a self-adjoint bounded operator in $L^2_\sigma(\Omega)$.

With the $H(\bm{u})$ of (\ref{Hamiltonian_by_u}), the $\mathcal{J}(\bm{u})$ of (\ref{J_by_u}), and the gradient $\partial_{\bm{u}}$
(see Sec.\,\ref{subsec:gradient}), we may write (\ref{Euler-3}) as
\begin{equation}
\partial_t \bm{u} = \mathcal{J}({\bm{u}}) \partial_{\bm{u}} H(\bm{u}).
\label{Euler-H}
\end{equation}

Assuming $n=2$, which we do henceforth, by Lemma \ref{lemma:Clebsch_representation} we may put
$\bm{u} = \nabla\varphi\times\uep$ and $\omega = -\Delta \varphi$ with $\varphi\in H_0^1(\Omega)$.
Fixing $\omega\in C(\Omega)$ as a parameter, and
putting $\bm{v}= \nabla\psi\times\uep$ with $\psi\in H_0^1(\Omega)$,
we may write
\[
\mathcal{J}({\bm{u}}) \bm{v}
= -P_\sigma \left[\omega\uep \times(\nabla\psi\times\uep)\right]
= -P_\sigma \left[\omega \nabla\psi\right].
\]
By (\ref{decomposition_of_L2}), $\bm{v} \in \mathrm{Ker}(\mathcal{J}({\bm{u}}))$ iff
\begin{equation}
\omega\nabla\psi = \nabla \theta \quad \exists \theta\in H^1(\Omega),
\label{Kernel-u}
\end{equation}
which is equivalent to (\ref{Kernel-general'}).
Arguing just like in Sec.~\ref{subsec:kernel}, we find
solutions of (\ref{Kernel-u}) of the form
\begin{equation}
\psi = g(\omega) .
\label{Kernel-1-u}
\end{equation}

The Casimir element constructed from (\ref{Kernel-1-u}) is
\begin{equation}
C_G(\bm{u}) = \int_\Omega\!dx\,   G(\uep\cdot\nabla\times\bm{u}) = \int_\Omega\!dx\, G(\omega)\, .
\label{Casimir1}
\end{equation}
Perturbing $\bm{u}$ by $\epsilon\tilde{\bm{u}}$ and restricting $\bm{n}\cdot\tilde{\bm{u}}=0$ on $\gam $
yields $\delta\omega = \epsilon\uep\cdot\nabla\times\tilde{\bm{u}}$, and
\begin{eqnarray*}
\delta C_G(\bm{u};\tilde{\bm{u}}) &=&  
\epsilon \int_\Omega\!dx\, G'(\omega) \uep\cdot\nabla\times\tilde{\bm{u}}  + O(\epsilon^2)
\\
&=& \epsilon\int_\Omega\!dx\, \nabla G'(\omega) \times\uep\cdot\tilde{\bm{u}} + O(\epsilon^2).
\end{eqnarray*}
Hence, upon denoting $g(\omega)=G'(\omega)$ we obtain
\begin{equation}
\partial_{\bm{u}}C_G(\bm{u}) = \nabla g(\omega)\times\uep \, .
\label{Casimir2}
\end{equation}
By (\ref{Kernel-1-u}), it is evident that
$\mathcal{J}({\bm{u}})\partial_{\bm{u}}C_G(\bm{u}) =-P_\sigma [\omega\nabla g(\omega)] = 0$,
confirming that $\partial_{\bm{u}}C_G(\bm{u}) \in \mathrm{Ker}(\mathcal{J}(\bm{u}))$.

Since here, $H(\bm{u})=\|\bm{u}\|^2/2$,  we may find an equilibrium point by solving $\bm{u} = - \partial_{\bm{u}} C_G(\bm{u})$ (c.f.\   Sec.~\ref{sec:introduction}, Eq.~(\ref{stationary_points2})).





\end{document}